\newcommand{\texorpdfstring}[2]{#1}
\title{A semidefinite program for unbalanced multisection in the stochastic block model}
\author[1]{Amelia Perry\thanks{Email: {\tt ameliaperry@mit.edu}.}}
\author[1]{Alexander S.\ Wein\thanks{Email: {\tt awein@mit.edu}. This research was conducted with Government support under and awarded by DoD, Air Force Office of Scientific Research, National Defense Science and Engineering Graduate (NDSEG) Fellowship, 32 CFR 168a.}}
\affil[1]{Massachusetts Institute of Technology, Department of Mathematics}
\begin{document}
\maketitle\thispagestyle{empty}

\begin{abstract}
We propose a semidefinite programming (SDP) algorithm for community detection in the stochastic block model, a popular model for networks with latent community structure. We prove that our algorithm achieves exact recovery of the latent communities, up to the information-theoretic limits determined by Abbe and Sandon \cite{as}. Our result extends prior SDP approaches by allowing for many communities of different sizes. By virtue of a semidefinite approach, our algorithms succeed against a semirandom variant of the stochastic block model, guaranteeing a form of robustness and generalization. We further explore how semirandom models can lend insight into both the strengths and limitations of SDPs in this setting.
\end{abstract}

\newpage\setcounter{page}{1}

\section{Introduction}

\subsection{Community detection}

The stochastic block model is among the most actively studied models for networks with latent community structure. Such models lend themselves to the statistical task of recovering the community structure given a single such graph, a task known as {\bf community detection}. This task manifests itself as that of finding human communities in social networks, and that of determining sets of proteins that function together in protein--protein interaction networks; more generally, community detection is the analogue of clustering for network data.

Since the introduction of the stochastic block model by Holland \cite{holland}, a wide range of algorithmic approaches have been deployed. Such algorithms are generally compared through the range of model parameters in which they succeed in recovering information; whether they achieve {\bf exact recovery} of the underlying community structure, or only {\bf partial recovery} of community structure correlated with the truth; and their computational efficiency.

In recent years, the block model has received a surge of interest as a meeting point of machine learning, algorithms, and statistical mechanics. A series of conjectures by Decelle et al.\ \cite{decelle} predicted a sharp threshold, separating a range of model parameters in which partial recovery is possible from a range in which it is impossible, using non-rigorous techniques from statistical physics. Some of these threshold effects have been proven \cite{mns,massoulie}, along with analogues for exact recovery \cite{abh,as}, and among the first algorithms to achieve exact recovery all the way up to these thresholds were those based on the powerful algorithmic tools of semidefinite programming (SDP). Indeed, the maximum likelihood estimation problems for block models amount to cut problems on graphs, which have a rich history of study through SDPs (for example \cite{gw,fj}); it is natural for these to appear among the most powerful tools for community detection.

In recent work of Abbe and Sandon \cite{as}, alternative methods of spectral clustering and local refinement have taken the lead in exact recovery, providing very general and sharp algorithmic results. It is natural to wonder how well SDP approaches can match these results, particularly in the light of strong robustness properties that convex programs enjoy. In this paper, we extend the frontier of optimal results on SDPs for exact recovery to the case of multiple communities of unequal sizes. We further address the relationship of SDPs to \emph{semirandom models}, highlighting their robustness as an advantage over spectral techniques, but also exposing how this same robustness can pose fundamental limits to SDP approaches.

\subsection{Models}

\subsubsection{Stochastic block model}

The {\bf stochastic block model} is a generative model for graphs, in which we suppose a vertex set of size $n$ has been partitioned into $r$ disjoint `communities' $S_1, \ldots, S_r$. An undirected graph is randomly generated from this hidden partition as follows: every unordered pair $(u,v)$ of distinct vertices is independently connected by an edge with probability $Q_{ij}$ where $u,v$ belong to communities $i,j$ respectively. Here $Q$ is a symmetric $r \times r$ matrix. Given a graph sampled from this model, the goal is to recover the underlying partition.

Many papers specialize to the {\bf planted partition model}, the case where $Q_{ii} = p$ for all $i$ and $Q_{ij} = q$ for all $i \ne j$. We will largely work with this specialization, and occasionally discuss the more general model.

\subsubsection{Semirandom models}
Much work in community detection focuses on the Erd\H os--R\'enyi-style random models defined above. These models are mathematically convenient, owing to the independence of edges, yet most real-world networks do not take this form. Moreover, there is cause for concern that some algorithms developed for the models above are highly brittle and do not generalize to other graph models: as illustrated in \cite{rjm}, one spectral method with strong theoretical guarantees degrades very rapidly when a sparse planted partition model is perturbed by adding small cliques, which are otherwise very infrequent in this random model, but do occur frequently in many real-world graphs.

Extending beyond the random models above, Blum and Spencer \cite{bs} introduced the notion of {\bf semirandom models} for graph problems, as an intermediate model between average-case and worst-case performance. The idea, extended by Feige and Kilian \cite{fk}, is to generate a preliminary graph according to a random model, but then allow a {\bf monotone adversary} to make unbounded, arbitrarily structured changes of a nature that should only help the algorithm by further revealing the ground truth. Although these changes may seem to make the problem easier, they may significantly alter the distribution of observations revealed to the algorithm, breaking statistical assumptions made about the observations. Semirandom models are thus a means of penalizing brittle algorithms that are over-tuned to particular random models. Semirandom models are no easier than their random counterparts, as the adversary may opt to make no changes. 

Following \cite{fk}, we define the {\bf semirandom planted partition model} as follows. A graph is first generated according to the planted partition model, and then a monotone adversary may arbitrarily add edges within communities, and remove edges between communities. The semirandom model can simulate aspects of real-world graphs and other graph models, such as a wide range of degree or subgraph profiles, while ensuring that the true community structure remains present in the graph. Thus the semirandom model aims to capture the unpredictable nature of real data.

An algorithm is called {\bf robust} to monotone adversaries if, whenever it succeeds on a sample from the random model, it also succeeds after arbitrary monotone changes to that sample. Algorithms based on semidefinite programming (SDP) are typically robust, or can be modified to be robust, and essentially all known robust algorithms are based on convex programs such as SDPs. This property guarantees some ability of such algorithms to generalize to other models; by contrast, algorithms that over-exploit precise statistics of a random model will typically fail against similar random models with different statistics, and will also typically fail against a semirandom model.

\subsubsection{Regimes}

The stochastic block model admits two major regimes of study, along with other variants. The main distinction is between {\bf partial recovery}, in which the goal is only to recover a partition that is reliably correlated with the true partition better than random guessing, versus {\bf exact recovery}, in which the partition must be recovered perfectly. In partial recovery, one tends to take the within-community edge probability $p$ and the between-community edge probability $q$ to be $\Theta(1/n)$, whereas in exact recovery one takes them to be $\Theta(\log n / n)$. In these asymptotic regimes one observes a sharp threshold behavior: within some range of parameters $p$ and $q$, the problem is statistically impossible, and outside of that range one can find algorithms that succeed with high probability. For partial recovery, this is established in \cite{mns, mns2, massoulie}, and for exact recovery, the most general result on this threshold is established in \cite{as}.

This paper concerns the case of exact recovery. Thus we take probabilities $p = \tilde p \log n / n$, $q = \tilde q \log n / n$, and we suppose that the vertex set is partitioned into $r$ communities $S_1,\ldots,S_r$ of sizes $s_i = |S_i|$. We write $\pi_i = s_i / n$, the proportion of vertices lying in community $i$. More precisely, we only require $s_i/n \to \pi_i$ as $n \to \infty$; for instance, each vertex could be independently randomly assigned to a community, with community $i$ chosen with probability $\pi_i$. We hold $r$, $\tilde p$, $\tilde q$, and $\pi_i$ constant as $n \to \infty$ -- these are the parameters for the problem -- and we aim to exactly recover the communities $S_i$ (up to permutation) {\bf with high probability}, by which we mean probability $1 - o(1)$ as $n \to \infty$, within as broad a range of parameters as possible.

We specialize to the {\bf assortative} planted partition model, where $p > q$; some techniques may transfer to the {\bf dissortative} $p < q$ case, by judicious negations, but we do not elaborate on this.

There remains one more distinction to make: we have not yet specified whether the parameters $r$, $\tilde p$, $\tilde q$, and $\pi_i$ are hidden or known to the recovery algorithm. We present algorithms for two cases: the {\bf known sizes} case, where the community proportions $\pi_i$ are known but $\tilde p$ and $\tilde q$ are potentially unknown; and the {\bf unknown sizes} case, where the $\pi_i$ are unknown but $\tilde p$, $\tilde q$, and the number $r$ of communities are known. Even the case of fully unknown parameters is tractable \cite{as2}, but there are fundamental barriers that prevent robust approaches such as ours from achieving this. We elaborate on this in Section~\ref{section:robustness}.

\subsection{Contributions and prior work}

The main result of this paper is that two variants of a certain SDP achieve exact recovery, up to the information-theoretic threshold, against the planted partition model with multiple communities of potentially different sizes. These SDPs are furthermore robust to monotone adversaries.

There has been considerable prior work on the development of algorithms and lower bounds for the stochastic block model, with algorithms making use of a wide range of techniques; see for instance the introduction to \cite{abh} and works cited there. The use of semidefinite programming for exact recovery originated with \cite{fk}, who achieved robust exact recovery in the case of two communities of equal sizes, falling slightly short of the optimal performance threshold. More recently, semidefinite algorithms have been found especially effective on sparse graphs \cite{csx,abh}, and have been proven to match lower bounds for the planted partition model in the case of two equal-sized communities \cite{ban15,hwx}, two different-sized communities \cite{hwx2}, and multiple equal-sized communities \cite{hwx2,abkk}, but the case of recovering multiple communities of different sizes through SDPs remained unresolved until now. The SDP that we consider in this paper has appeared before in the literature \cite{gv,abkk}, and in fact Agarwal et al.\ \cite{abkk} conjectured that it achieves exact recovery up to the threshold; the main result of this paper resolves this conjecture affirmatively.

Abbe and Sandon \cite{as} established the information-theoretic threshold for the general stochastic block model with individual community-pair probabilities $Q_{ij}$, which we visit in Section~\ref{section:lower-bounds}. In addition to proving a sharp lower bound, they analyzed an algorithm that succeeds with high probability up to their lower bound, thus precisely determining the statistical limits for exact recovery. Their result may appear strictly more general than ours, applying to the general stochastic block model rather than the more specific planted partition model. However, their algorithm involves a highly-tuned spectral clustering step that is likely not robust to monotone adversaries or other forms of perturbation (see Section~\ref{section:robustness} for discussion), and so our work can be seen as an improvement from the robustness standpoint.

In fact, there are barriers against more general \emph{semirandom} results. We show in Section~\ref{section:robustness} that no algorithm robust to monotone adversaries can handle the case of fully unknown parameters, nor can it extend from the planted partition model to the slightly more general strongly assortative block model. Thus our SDP-based approach is already essentially as general as one could hope for without compromising the strength of its robustness.

Although we are primarily interested in the exact recovery regime, we will take a brief detour and survey the parallel line of work on partial recovery. In contrast to exact recovery, none of the partial recovery algorithms that succeed down to the information-theoretic threshold are based on SDPs, and they have no robustness guarantees. There are robust SDP methods known \cite{gv,mpw,mmv} but they fall short of the threshold; this is essential, as it has been shown that no algorithm can robustly achieve the partial recovery threshold \cite{mpw}. Although robust algorithms for partial recovery cannot reach the threshold, there is an SDP that is conjectured to come quite close \cite{jmr} and is known to achieve the threshold in the limit of large average degree \cite{ms}; however, this analysis does not come with a robustness guarantee. By contrast, this paper shows that robust recovery is efficiently achievable up to the threshold in the exact recovery setting.

Community detection in semirandom models was previously discussed in \cite{fk,csx2,al}, and further since the first appearance of this paper, in \cite{mpw,mmv,hwx}. Other works \cite{cl,mmv} discuss robustness to some quantity of corruptions or outlier nodes.

\subsection{Overview of techniques}

In this section we summarize the arguments by which we prove that our SDPs achieve exact recovery against the semirandom planted partition model. More precisely, we show that with high probability, the unique SDP optimum is a matrix that exactly encodes the correct community structure; in contrast to some SDP-based algorithms (both for community detection and otherwise), there is no `rounding' or post-processing procedure required.

The first step is to show robustness in the sense that if the SDP succeeds against a particular graph, then it will continue to succeed if that graph is modified by monotone changes. This follows from a simple argument in \cite{fk} which essentially observes that the SDP optimizes over a space of solutions, and monotone changes improve the objective value of the true solution more than they improve the objective value of any other solution.

In light of the above argument, it suffices to show that our SDP succeeds with high probability against the random model. As in previous work on SDPs for exact recovery, our proof proceeds by constructing a dual certificate. The idea here is that the SDP is a maximization problem for which the true solution is feasible; what we need to show is that no other solution has a larger objective value than the true one, which can be done by finding a matching solution to the dual SDP. This ``dual certificate'' that we construct depends on the random graph and can be shown to be dual-feasible with high probability. As is typical, the construction of the dual certificate is guided by complementary slackness, which provides a set of necessary conditions. However, these necessary conditions are not enough to uniquely determine the dual certificate and so some creativity is necessary here in order to find an optimal certificate that gets all the way to the threshold. Part of what makes the general problem harder than the special cases considered previously \cite{hwx2} is that the general case has more dual variables that are not automatically determined by complementary slackness but that nonetheless need to be chosen carefully. The crucial step in the construction of our dual certificate is that by making a change of variables we are able to find a connection between the complementary slackness conditions and the non-negativity of differences of certain binomial random variables, which in turn are closely related to the information-theoretic threshold. It then becomes clear which parts of the dual solution need to crucially be set a particular way, and which ones have ``wiggle room.'' As is typical, showing that our dual matrix is positive semidefinite relies in part on the spectral concentration of the adjacency matrix, e.g.\ Theorem~5.2 in \cite{lr}.

\subsection{Organization of this paper}

In Section~\ref{section:algorithms} we present our two closely-related semidefinite programs for exact recovery. In Section~\ref{section:lower-bounds} we discuss the information-theoretic threshold determined in \cite{as}. In Section~\ref{section:robustness} we show that our SDPs are robust to the semirandom model, and we give some impossibility results for robust algorithms. In Section~\ref{section:proof} we prove our main result: our SDPs achieve exact recovery up to the information-theoretic threshold.

\section{Semidefinite algorithms and results}
\label{section:algorithms}

In this section we derive semidefinite programs for exact recovery, by taking convex relaxations of maximum likelihood estimators. 
Throughout we will use the letters $u,v$ for vertices and the letters $i,j$ for communities.

\subsection{Maximum likelihood estimators}
\label{sec:MLE}
Given an observed $n$-vertex graph, a natural statistical approach to recovering the community structure is to compute a maximum-likelihood estimate (MLE). We begin by stating the log-likelihood of a candidate partition into communities:
$$ \log \likeli \hspace{.3em} = \hspace{-1em}
\sum_{\substack{u     \sim v \\ \text{same community}}} \hspace{-1.5em} \log p \hspace{.5em} + \hspace{-1em}
\sum_{\substack{u     \sim v \\ \text{diff community}}} \hspace{-1em} \log q \hspace{.5em} + \hspace{-1em}
\sum_{\substack{u \not\sim v \\ \text{same community}}} \hspace{-1.5em} \log (1-p) \hspace{.5em} + \hspace{-1em}
\sum_{\substack{u \not\sim v \\ \text{diff community}}} \hspace{-1em} \log (1-q) $$
Here $\sim$ denotes adjacency in the observed graph. We can represent a partition by its $n \times n$ {\bf partition matrix}
$$ X_{uv} = \begin{cases} 1 & \text{if $u$ and $v$ are in the same community,} \\ 0 & \text{otherwise.} \end{cases} $$
In terms of $X$ and the observed $(0,1)$-adjacency matrix $A$, we can write
\begin{align*}
2 \log \likeli(X) &= \langle A,X \rangle \log p + \langle A,J-X \rangle \log q \\
    &+ \langle J-I-A, X \rangle \log (1-p) + \langle J-I-A, J-X \rangle \log (1-q)
\end{align*}
where $I$ is the identity matrix, $J$ is the all-ones matrix, and $\langle \cdot,\cdot \rangle$ denotes the Frobenius (entry-wise) inner product of matrices.
Expanding, and discarding terms that do not depend on $X$, including $\langle X,I \rangle = n$:
\begin{align*}
2 \log \likeli(X) + \text{const} &= \alpha \langle A,X \rangle - \beta \langle J,X \rangle \\
& = \alpha \langle A,X \rangle - \beta \sum_{i=1}^r s_i^2,
\end{align*}
where
$$ \alpha = \log \frac{p(1-q)}{q(1-p)}, \quad \beta = \log \frac{1-q}{1-p}. $$
The assumption $p > q$ implies that $\alpha$ and $\beta$ are positive.

At this stage it is worth distinguishing the two cases of known and unknown sizes.
In the first form, the block sizes $s_i$ are known; then the second term above is a constant, and the MLE amounts to a minimum fixed-sizes multisection problem, and is NP-hard for worst-case $A$.
\begin{program}[Known sizes MLE]
\begin{align*}
\text{maximize} \quad & \langle A,X \rangle \\
\text{over} \quad & \text{partition matrices $X$ with community sizes }s_1,\ldots,s_r.
\end{align*}
\end{program}

In the second form of the problem, the block sizes $s_i$ are unknown (but $r$ is known). Now the MLE requires knowledge of $p$ and $q$, and the resulting regularized minimum multisection problem is also likely computationally hard in general.
\begin{program}[Unknown sizes MLE]
\begin{align*}
\text{maximize} \quad & \langle A,X \rangle - \omega \langle J,X \rangle \\
\text{over} \quad & \text{all partition matrices }X \text{ on $r$ communities},
\end{align*}
where
\begin{equation}\label{eq:omega-definition}
\omega = \frac{\beta}{\alpha} = \frac{\log (1-q) - \log (1-p)}{\log p + \log(1-q) - \log q - \log(1-p)}.
\end{equation}
\end{program}

Although it is not obvious from the definition, we can think of $\omega$ as a sort of average of $p$ and $q$; the proof is deferred to Appendix~\ref{appendix:omega-bounds}:
\begin{lemma}
\label{lemma:omega-bounds}
For all $0 < q < p < 1$, we have $q < \omega < p$.
\end{lemma}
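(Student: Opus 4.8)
The plan is to recognize that, by the definition $\omega = \beta/\alpha$ with $\alpha > 0$, the value $\omega$ is precisely the unique root of the affine function $\phi(x) := \alpha x - \beta$, which has positive slope $\alpha$. Consequently it suffices to show $\phi(q) < 0 < \phi(p)$: then $\phi$, being continuous and strictly increasing, has its unique zero strictly between $q$ and $p$, and that zero is $\omega$.

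To pin down the signs of $\phi(q)$ and $\phi(p)$, I would rewrite them in terms of the binary Kullback--Leibler divergence
\[
D(a\|b) := a\log\frac{a}{b} + (1-a)\log\frac{1-a}{1-b},
\]
which is strictly positive whenever $a \ne b$ with $a,b \in (0,1)$ (this is Gibbs' inequality, a one-line consequence of $\log t \le t-1$ applied to $t = b/a$ and $t = (1-b)/(1-a)$). Expanding the logarithms in $\alpha = \log\frac{p(1-q)}{q(1-p)}$ and $\beta = \log\frac{1-q}{1-p}$ and collecting terms, one gets the two identities $\phi(q) = \alpha q - \beta = -D(q\|p)$ and $\phi(p) = \alpha p - \beta = D(p\|q)$. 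Since $p \ne q$, both divergences are strictly positive, so $\phi(q) = -D(q\|p) < 0$ and $\phi(p) = D(p\|q) > 0$, as desired.

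Putting the pieces together: $\phi$ is affine with slope $\alpha > 0$, hence strictly increasing; it is negative at $q$ and positive at $p$; therefore its unique root lies in the open interval $(q,p)$; and that root is $\omega$ because $\phi(\omega) = \alpha\omega - \beta = 0$. This yields $q < \omega < p$.

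I do not expect a genuine obstacle here. The only point requiring any care is the algebraic verification of $\alpha q - \beta = -D(q\|p)$ and $\alpha p - \beta = D(p\|q)$, which is just a routine rearrangement of logarithms; alternatively one can bypass $D$ entirely and check the two inequalities $\frac{q}{1-q}\log\frac{p}{q} < \log\frac{1-q}{1-p}$ and $\frac{1-p}{p}\log\frac{1-q}{1-p} < \log\frac{p}{q}$ directly, each again following immediately from $\log t \le t-1$.
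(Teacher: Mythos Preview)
Your argument is correct. The identities $\alpha q - \beta = -D(q\|p)$ and $\alpha p - \beta = D(p\|q)$ check out by a straightforward expansion, and Gibbs' inequality then gives the strict signs since $p \ne q$; the conclusion $q < \omega < p$ follows immediately from the monotonicity of the affine map $\phi$.

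The paper's proof takes a slightly different route: it rewrites $1/\omega = 1 + \dfrac{\log(p/q)}{\log((1-q)/(1-p))}$ and then bounds this ratio directly using the two-sided estimate $\frac{x-1}{x} \le \log x \le x-1$, obtaining $1/p < 1/\omega < 1/q$. Both arguments rest on the same elementary inequality $\log t \le t-1$, but your packaging via the binary KL divergence is more conceptual: it makes transparent why strict inequality holds (namely, strict positivity of $D(\cdot\|\cdot)$ off the diagonal) and avoids any manipulation of the compound fraction $1/\omega$. The paper's version is perhaps more hands-on but arrives at the same place with the same underlying tool.
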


In order to compute either MLE in polynomial time, we will pass to a relaxation and show that the computation succeeds with high probability.

\subsection{Semidefinite algorithms}

The seminal work of \cite{gw} began a successful history of semidefinite relaxations for cut problems. Proceeding roughly in this vein to write a convex relaxation for the true feasible region of partition matrices of given sizes, one might reasonably arrive at the following relaxation of the known-sizes MLE:

\begin{program}[Known sizes, primal, weak form]
\label{prog:known-primal-weak}
\begin{align*}
\text{maximize} \quad & \langle A,X \rangle \\
\text{subject to} \quad &\langle J,X \rangle = \sum_i s_i^2, \\
\forall u \quad & X_{uu} = 1, \\
    & X \geq 0, \\
    & X \succeq 0.
\end{align*}
\end{program}
Here $X \geq 0$ indicates that $X$ is entry-wise nonnegative, and $X \succeq 0$ indicates that $X$ is positive semidefinite (and, in particular, symmetric).

This SDP appears in \cite{al} (under the name SDP-2'). Under the assumption of equal-sized communities, a stronger form involving row sum constraints appears in \cite{al, hwx2, abkk}, and the latter two papers find that this strengthening achieves exact recovery up to the information-theoretic lower bound in the case of equal-sized communities. 

In the case of unequal community sizes, it is more difficult to pursue this line of strengthening. The authors have analyzed the weaker SDP above for multiple unbalanced communities, and found that it does achieve exact recovery within some parameter range in the logarithmic regime, but its threshold for exact recovery is strictly worse than the information-theoretic threshold. (We have opted to omit these proofs from this paper, in view of our stronger results on the programs below.)

Instead, we revisit the somewhat arbitrary decision to encode the partition matrix with entries $0$ and $1$. Indeed, SDPs for the unbalanced two-community case tend to use entries $-1$ and $1$ \cite{fk,abh,hwx2}, with success up to the information-theoretic lower bound \cite{hwx2}. Some choices of entry values will result in a non-PSD partition matrix, so we opt for the choice of entries for which the partition matrix is only barely PSD: namely, we define the {\bf centered partition matrix}
$$ X_{uv} = \begin{cases} 1 & \text{if $u$ and $v$ are in the same community,} \\ \frac{-1}{r-1} & \text{otherwise.} \end{cases} $$
Recall that $r$ is the number of communities. This matrix is PSD as it is a Gram matrix: if we geometrically assign to each community a vector pointing to a different vertex of a centered regular simplex in $\RR^{r-1}$, the resulting Gram matrix is precisely the centered partition matrix.

Aiming to recover this matrix, we reformulate our SDP as follows:
\begin{program}[Known sizes, primal, strong form]
\label{prog:known-primal}
\begin{align*}
\text{maximize} \quad & \langle A,X \rangle \\
\text{subject to} \quad &\langle J,X \rangle = \frac{r}{r-1} \sum_i s_i^2 - \frac{1}{r-1} \, n^2, \\
\forall u \quad & X_{uu} = 1, \\
    & X \geq \frac{-1}{r-1}, \\
    & X \succeq 0.
\end{align*}
\end{program}
This SDP bears a strong similarity to classical SDPs for maximum $r$-cut \cite{fj}, and \cite{abkk} conjectured that it achieves exact recovery for unbalanced multisection up to the information-theoretic lower bound. Our main result resolves this conjecture affirmatively.

Other than the natural motivation of following classical approaches to $r$-cut problems, the change from partition matrices to centered partition matrices buys us something mathematically concrete: the intended primal solution now has rank $r-1$ instead of rank $r$, which through complementary slackness will entail one less constraint on a candidate dual optimum.

We can write down a very similar relaxation for the MLE in the case of unknown sizes but known $p$ and $q$:
\begin{program}[Unknown sizes, primal, strong form]
\label{prog:unknown-primal}
\begin{align*}
\text{maximize} \quad & \langle A,X \rangle - \omega \langle J,X \rangle \\
\text{subject to} \quad \forall u \quad & X_{uu} = 1, \\
    & X \geq \frac{-1}{r-1}, \\
    & X \succeq 0.
\end{align*}
\end{program}
Here $\omega$ is as defined in (\ref{eq:omega-definition}). Our main assertion is that these SDPs achieve exact recovery up to the lower bounds in \cite{as}:
\begin{theorem}\label{thm:recovery}
Given input from the planted partition model with parameters $(\tilde p, \tilde q, \pi)$, Programs~\ref{prog:known-primal} (known sizes) and \ref{prog:unknown-primal} (unknown sizes) recover the true centered partition matrix as the unique optimum, with probability $1 - o(1)$ over the random graph, within the information-theoretically feasible range of parameters described by Theorem~\ref{thm:as-lower-bounds}.
\end{theorem}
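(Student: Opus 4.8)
The plan is a primal--dual argument. By the robustness reduction (the monotone-adversary argument of \cite{fk} carried out in Section~\ref{section:robustness}), it suffices to show that with probability $1-o(1)$ over a sample from the \emph{random} planted partition model, the true centered partition matrix $X^\star$ is the unique optimum of each program. Since $X^\star$ is feasible for both programs by construction, we only need to produce, with high probability, a dual solution certifying its optimality, and then strengthen the certificate to force uniqueness.

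First I would write out the dual of Program~\ref{prog:known-primal}. The dual variables are a scalar $\lambda$ dual to the constraint $\langle J,X\rangle = \tfrac{r}{r-1}\sum_i s_i^2 - \tfrac1{r-1}n^2$, a diagonal matrix $D$ dual to the constraints $X_{uu}=1$, an entrywise-nonnegative matrix $\Lambda \ge 0$ dual to $X \ge \tfrac{-1}{r-1}$, and a PSD matrix $S\succeq 0$ dual to $X\succeq 0$. Stationarity gives a matrix identity of the schematic form $S = D + \lambda J - \Lambda - A$ (up to fixed constants from the shift $\tfrac1{r-1}J$), and complementary slackness imposes two conditions: $S X^\star = 0$, so the column space of $X^\star$ must lie in $\ker S$; and $\Lambda$ is supported on between-community vertex pairs, since only there does $X^\star$ meet its lower bound $\tfrac{-1}{r-1}$. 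Program~\ref{prog:unknown-primal} has the same dual with no free $\lambda$: the term $-\omega\langle J,X\rangle$ in the objective forces the analogous multiplier to be exactly $\omega$, so the unknown-sizes case is handled by specializing the known-sizes construction to $\lambda = \omega$, which Lemma~\ref{lemma:omega-bounds} places in the admissible range $q < \omega < p$.

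The heart of the proof is the explicit choice of the dual variables that complementary slackness does not pin down. Slackness determines $S$ only on a codimension-$(r-1)$ subspace, and --- more so than in the equal-size and two-community special cases of \cite{ban15,hwx,hwx2} --- leaves genuine freedom in $D$, $\Lambda$, and $\lambda$; a naive choice falls strictly short of the threshold. The key idea is a change of variables under which the two hard requirements, $\Lambda \ge 0$ entrywise and $S \succeq 0$, become statements about nonnegativity of certain differences of binomial random variables, which are precisely the quantities controlling the Abbe--Sandon threshold of Theorem~\ref{thm:as-lower-bounds}. Guided by this, I would set the diagonal entries $d_u$ from suitable community-weighted degrees of $u$, choose $\Lambda$ on the between-community block to exactly cancel the part of $A$ that would otherwise violate feasibility, and (in the known-sizes case) tune $\lambda$ to balance the all-ones direction. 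Entrywise nonnegativity of $\Lambda$ then reduces, for each of the $O(n^2)$ between-community pairs, to a large-deviation estimate for a difference of binomials; this holds throughout the information-theoretically feasible region exactly by the characterization recalled in Section~\ref{section:lower-bounds}, and the union bound over pairs is absorbed by the $\log n$ already present in $p$ and $q$.

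The main obstacle --- and the last step --- is proving that the resulting matrix $S$ is PSD with $\ker S$ equal to exactly the column space of $X^\star$; the latter is what yields uniqueness of the optimum. I would split $S = \bar S + (S - \bar S)$, where $\bar S$ is obtained by replacing $A$ by $\bar A := \mathrm{E}[A]$ and $\Lambda$ by its expectation. The matrix $\bar S$ is a combination of $I$, $J$, and the block-constant matrix underlying $X^\star$, so its spectrum is explicit, and the choices above are exactly what make $\bar S$ PSD with kernel equal to the column space of $X^\star$ and spectral gap $\Omega(\log n)$ on the orthogonal complement of a fixed $O(1)$-dimensional ``bad'' subspace $B$ containing the all-ones vector and the community-indicator vectors. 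The fluctuation $S - \bar S$ equals $-(A - \bar A)$ plus the fluctuation of $\Lambda$ plus lower-order diagonal corrections, and has operator norm $O(\sqrt{\log n})$ with high probability by spectral concentration of the adjacency matrix in the logarithmic-degree regime (Theorem~5.2 in \cite{lr}), which is dominated by the $\Omega(\log n)$ gap away from $B$. On $B$ the deterministic bound degenerates, and there I would argue separately: a direct quadratic-form computation shows $\bar S$ is positive definite on $B \cap (\text{column space of }X^\star)^\perp$, and any remaining freedom is eliminated using the linear constraints $X_{uu}=1$ and $\langle J,X\rangle = \cdots$ together with the Gram/simplex structure of the feasible region inside the column space of $X^\star$. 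Combining these pieces shows $X^\star$ is the unique optimum with probability $1-o(1)$, for both programs at once.
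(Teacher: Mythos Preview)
Your high-level outline is right, and you correctly locate the crux in a change of variables linking complementary slackness to binomial tail events at the Abbe--Sandon threshold. But two of your steps, as written, would not close.

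First, you say that entrywise nonnegativity of the multiplier (the paper's $\Gamma$) reduces to $O(n^2)$ large-deviation events, one per between-community pair, and that the union bound is ``absorbed by the $\log n$.'' It is not: at the threshold each such tail event has probability $n^{-D_+(i,j)+o(1)}$ with $D_+(i,j)$ only just exceeding $1$, so a union bound over $n^2$ pairs diverges. The paper's construction avoids this by observing that complementary slackness constrains only the \emph{row sums} $R_{uj}=\sum_{v\in S_j}\Gamma_{uv}$, which it parametrizes by per-vertex scalars $\gamma_u$. Positivity of these $O(n)$ row sums is exactly the event governed by Lemma~\ref{lemma:divergence-binomial}, and the union bound over $n$ vertices closes precisely when $D_+>1$. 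One then takes each off-diagonal block $\Gamma_{S_iS_j}$ to be the unique rank-one matrix with those row and column sums, so entrywise positivity is automatic. Reducing from $O(n^2)$ entries to $O(n)$ row sums is not cosmetic; it is what makes the threshold attainable.

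Second, your PSD argument splits $S=\bar S+(S-\bar S)$ and claims $\|S-\bar S\|=O(\sqrt{\log n})$. This fails because of the diagonal: $\nu_v=E(v,i)-\omega s_i+\gamma_v$ has fluctuation governed by $E(v,i)-\EE[E(v,i)]$, and the maximum over $n$ vertices of a centered binomial with variance $\Theta(\log n)$ is $\Theta(\log n)$, not $O(\sqrt{\log n})$. Hence $\|\diag(\nu)-\EE\diag(\nu)\|=\Theta(\log n)$, the same order as the gap you hope to beat. The paper does not bound this fluctuation at all. Instead it decomposes the orthogonal complement of $\mathrm{span}\{\one_i-\one_j\}$ into block-zero-sum vectors $z$ and one extra direction $y$; on the $z$ part it uses $z^\top\diag(\nu)z\ge \min_v\nu_v$ directly, together with $\|A-\EE A\|$ and $\|\Gamma-\EE\Gamma\|$. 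The entire purpose of the per-vertex tuning of $\gamma_v$ within its interval $[\alpha_v,\beta_v]$ is to guarantee $\min_v\nu_v\ge \log n/\log\log n$ even at vertices where $E(v,i)$ is far below its mean; a choice of $d_u$ based on ``community-weighted degrees'' without this vertex-by-vertex correction will leave some $\nu_v$ negative near the threshold, and the PSD claim collapses.
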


To reiterate, the known sizes SDP (Program~\ref{prog:known-primal}) requires knowledge of the sum of squares of community sizes (equivalently $\sum_i \pi_i^2$) and the number $r$ of communities, but $p$ and $q$ can be unknown; the unknown sizes SDP (Program~\ref{prog:unknown-primal}) requires knowledge of $p$, $q$, and $r$ (or at least $\omega$ and $r$), but the community proportions $\pi$ can be unknown.

Program~\ref{prog:unknown-primal} can tolerate some mis-specification in the value of $\omega$, though this tolerance may shrink to zero as the problem parameters approach the information-theoretic threshold. This tolerance is implicit in the proof of Theorem~\ref{thm:recovery}.

\section{Lower bounds}
\label{section:lower-bounds}

In this section we visit the general information-theoretic lower bounds established by \cite{as}, and we specialize them to the planted partition model, recovering bounds that directly generalize those of \cite{hwx2} for the case of two unequal communities. Meanwhile, we recall operational interpretations of the lower bounds, which will be our main concrete handle on them.

Consider the general stochastic block model (probabilities $Q_{ij}$) in the regime where community sizes are $s_i = \pi_i n$ and probabilities are $Q_{ij} = \tilde Q_{ij} \log n / n$, with $\pi$ and $\tilde Q$ constant as $n \to \infty$. The threshold for exact recovery is as follows:
\begin{theorem}[\cite{as}]
\label{thm:as-lower-bounds}
For $i \ne j$ define the {\bf Chernoff--Hellinger divergence}
\begin{equation}\label{eq:ch-divergence}
D_+(i,j) \stackrel{\mathrm{def}}{=} \sup_{t \in [0,1]} \sum_k \pi_k (t\, \tilde Q_{ik} + (1-t) \tilde Q_{jk} - \tilde Q_{ik}^t \tilde Q_{jk}^{1-t}).
\end{equation}

Then exact recovery is information-theoretically solvable if
\begin{equation}\label{eq:divergence-condition}
\forall i \neq j \quad D_+(i,j) > 1.
\end{equation}
Within this range, there exist efficient algorithms that succeed with probability $1 - o(1)$.

Conversely, exact recovery is impossible if there exists a pair $(i,j)$ with $D_+(i,j) < 1$. Within this range, any estimator for the community structure (even knowing the model parameters) will fail with probability $1 - o(1)$.
\end{theorem}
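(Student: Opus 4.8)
The plan is to establish the two halves of Theorem~\ref{thm:as-lower-bounds} by essentially dual arguments: the \emph{impossibility} half (some $D_+(i,j)<1$) via a genie-aided hypothesis-testing lower bound, and the \emph{achievability} half (all $D_+(i,j)>1$) via a two-phase algorithm — coarse spectral clustering followed by vertex-by-vertex likelihood refinement. Both halves rest on the same computation: the Chernoff information between the two ``edge-profile'' distributions associated to communities $i$ and $j$ is $(1+o(1))D_+(i,j)\log n$, so the optimal error of the corresponding binary test is $n^{-D_+(i,j)\pm o(1)}$. Concretely, if a vertex $v$ sits in community $i$ then the vector of edge counts from $v$ into the communities is distributed as $\prod_k \mathrm{Bin}(s_k, Q_{ik})$ (up to an off-by-one in coordinate $i$), and analogously for $j$; replacing each binomial by its Poisson approximation, the Rényi-$t$ integral $\sum_x P(x)^t \bar P(x)^{1-t}$ factorizes over the independent coordinates into $\exp(-\log n \sum_k \pi_k(t\tilde Q_{ik}+(1-t)\tilde Q_{jk} - \tilde Q_{ik}^t\tilde Q_{jk}^{1-t}))$, and optimizing over $t$ yields exactly $n^{-D_+(i,j)}$.

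For the impossibility half, fix $(i,j)$ with $D_+(i,j)<1$. A standard genie argument shows that any estimator $\hat\sigma$ fails with probability at least $\Pr[\bigcup_v B_v]$, where for each $v\in S_i\cup S_j$ the event $B_v$ is that the Bayes-optimal (likelihood-ratio) test between hypotheses ``$v\in S_i$'' and ``$v\in S_j$'', given the true labels of all other vertices, errs. By the computation above $\Pr[B_v] = n^{-D_+(i,j)-o(1)}$, so $\mathbb{E}[\sum_v \mathbf{1}_{B_v}] = n^{1-D_+(i,j)-o(1)}\to\infty$. Since $B_v$ depends almost entirely on the edges incident to $v$, the events $B_v,B_w$ for $v\ne w$ share only the single edge $(v,w)$ and hence have negligible correlation, $\Pr[B_v\cap B_w] = (1+o(1))\Pr[B_v]\Pr[B_w]$; a second-moment (Paley--Zygmund) argument then gives $\Pr[\sum_v \mathbf{1}_{B_v}\ge 1]\to 1$, so every estimator fails w.h.p.

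For the achievability half, assume $D_+(i,j)>1$ for all $i\ne j$. \textbf{Phase 1:} since the average degree is $\Theta(\log n)$, the adjacency matrix concentrates, $\|A-\mathbb{E}A\| = O(\sqrt{\log n})$ (e.g.\ Theorem~5.2 in \cite{lr}), so a Davis--Kahan argument on its top eigenvectors yields in polynomial time a partition agreeing with the truth on all but a $o(1/\log n)$ fraction of vertices (``almost exact recovery''); only this weak guarantee is needed. \textbf{Phase 2:} reassign each vertex $v$ to the community maximizing the likelihood of the edges between $v$ and the Phase-1 labels of the remaining vertices. Had the \emph{true} labels been used, vertex $v$ would be misclassified with probability $n^{-\min_{j\ne \ell(v)}D_+(\ell(v),j)-o(1)} = o(1/n)$ by the Chernoff computation, so a union bound over all $n$ vertices gives exact recovery. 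It remains to check that the Phase-1 errors do not flip any Phase-2 decision: each likelihood score is a sum of $\Theta(n)$ terms, perturbed by at most $o(n/\log n)\cdot O(\log n/n) = o(1)\cdot(\text{something})$ — more precisely the perturbation is $o(\log n)$ — whereas the winning and runner-up scores differ by $\Theta(\log n)$ with the probability claimed above, so the decisions are unchanged.

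I expect the main obstacle to be the \emph{uniformity and sharpness} of the large-deviation estimates on both sides. On the impossibility side, one must control the weak dependence among the $\Theta(n)$ events $B_v$ carefully enough for the second-moment method — the clean way is to condition on a ``typical'' graph, re-randomize a sparse set of incident edges, and verify the conditional probabilities still match; one must also be careful with the $o(1)$ slack in the exponent (it can be absorbed since $1-D_+(i,j)$ is a positive constant). On the achievability side, the delicate point is that the Phase-1 misclassification rate must be small enough \emph{relative to} the $\Theta(\log n)$ separation of likelihood scores, and the per-vertex tail bounds of the form $n^{-D_+(i,j)-o(1)}$ must hold \emph{simultaneously} for all $n$ vertices with a uniform $o(1)$; establishing these sharp, uniform binomial tail asymptotics — rather than the crude bounds that would only yield almost-exact recovery — is where the real work lies, and is exactly the ingredient that pins the threshold at $D_+=1$.
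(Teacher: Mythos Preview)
The paper does not prove Theorem~\ref{thm:as-lower-bounds}. This theorem is stated with the attribution ``\cite{as}'' and is quoted as an external result of Abbe and Sandon; the paper uses it only as a benchmark (and invokes the associated tail estimate via Lemma~\ref{lemma:divergence-binomial}) but offers no proof of either direction. So there is no ``paper's own proof'' to compare your proposal against.

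Your sketch is a reasonable outline of the Abbe--Sandon argument itself --- genie-aided hypothesis testing plus second moment for the converse, spectral initialization plus local likelihood refinement for achievability, both driven by the Chernoff computation that identifies $D_+(i,j)$ --- but that is a proof of the cited theorem, not of anything in this paper. If you were asked to supply a proof for this paper, the correct response is simply to cite \cite{as}.
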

\cite{as} also shows that the borderline case $D_+(i,j) = 1$ remains solvable; for simplicity we neglect this case.

As mentioned in \cite{as}, if we specialize as far as the planted partition model with two equal-sized communities, then the CH-divergence is obtained at $t=\frac12$, and one recovers the threshold $\sqrt{\tilde p} - \sqrt{\tilde q} \geq \sqrt{2}$, as seen in \cite{abh} and other works. In Appendix~\ref{appendix:lower-bound-closed-form}, we compute the following generalization:
\begin{proposition}
\label{prop:lower-bound-closed-form}
For the planted partition model with parameters $(\tilde p, \tilde q, \pi)$, the CH-divergence is given by
$$ D_+(i, j) = \pi_i \tilde q + \pi_j \tilde p - \gamma + \frac12 \tau (\pi_i - \pi_j) \log\left( \frac{\pi_j \tilde p}{\pi_i \tilde q} \cdot \frac{\tau(\pi_i-\pi_j)+\gamma}{\tau(\pi_j-\pi_i)+\gamma} \right), $$
$$ \gamma = \sqrt{\tau^2(\pi_i-\pi_j)^2 + 4 \pi_i \pi_j \tilde p \tilde q}, $$
$$ \tau = \frac{\tilde p - \tilde q}{\log \tilde p - \log \tilde q}. $$
\end{proposition}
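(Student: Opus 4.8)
The plan is to start from the definition of the Chernoff--Hellinger divergence in \eqref{eq:ch-divergence}, specialize the parameters $\tilde Q$ to the planted partition values $\tilde Q_{kk} = \tilde p$ and $\tilde Q_{k\ell} = \tilde q$ for $k \neq \ell$, and then carry out the one-dimensional optimization over $t \in [0,1]$ explicitly. Fix a pair $i \neq j$. When we expand $\sum_k \pi_k(t\tilde Q_{ik} + (1-t)\tilde Q_{jk} - \tilde Q_{ik}^t \tilde Q_{jk}^{1-t})$, only the terms $k = i$ and $k = j$ contribute anything other than $\pi_k(\tilde q - \tilde q) = 0$: for $k \notin \{i,j\}$ we have $\tilde Q_{ik} = \tilde Q_{jk} = \tilde q$, so the summand vanishes. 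Thus the objective reduces to a function of $t$ involving only $\pi_i, \pi_j, \tilde p, \tilde q$, namely
\[
f(t) = \pi_i\big(t\tilde p + (1-t)\tilde q - \tilde p^{\,t}\tilde q^{\,1-t}\big) + \pi_j\big(t\tilde q + (1-t)\tilde p - \tilde q^{\,t}\tilde p^{\,1-t}\big).
\]
The linear parts combine to $\pi_i\tilde q + \pi_j\tilde p + t(\pi_i - \pi_j)(\tilde p - \tilde q)$, and the remaining two exponential terms can both be written using the single quantity $(\tilde p/\tilde q)^{t}$ up to constant factors, which is what makes the maximization tractable.

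Next I would differentiate $f$ in $t$ and set $f'(t) = 0$. Writing $L = \log\tilde p - \log\tilde q > 0$, the derivative of the linear part is $(\pi_i - \pi_j)(\tilde p - \tilde q) = (\pi_i-\pi_j)\tau L$ with $\tau$ as defined, while differentiating $-\pi_i \tilde p^{\,t}\tilde q^{\,1-t} - \pi_j \tilde q^{\,t}\tilde p^{\,1-t}$ produces terms proportional to $L$ times $\tilde p^{\,t}\tilde q^{\,1-t}$ and $\tilde q^{\,t}\tilde p^{\,1-t}$. Setting $x = (\tilde p/\tilde q)^{t}$, the stationarity condition becomes a quadratic-type equation in $x$ (after clearing a common factor), whose positive root is exactly what gives rise to $\gamma = \sqrt{\tau^2(\pi_i-\pi_j)^2 + 4\pi_i\pi_j\tilde p\tilde q}$. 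Concretely, I expect the optimal $t = t^*$ to satisfy $\tilde p^{\,t^*}\tilde q^{\,1-t^*} = \tfrac12\big(\gamma + \tau(\pi_i-\pi_j)\big)/\pi_i$ and $\tilde q^{\,t^*}\tilde p^{\,1-t^*} = \tfrac12\big(\gamma - \tau(\pi_i-\pi_j)\big)/\pi_j$; note the product of these two equals $\tilde p\tilde q$, which is a consistency check, and their ratio is $x^2 = \big(\tfrac{\pi_j}{\pi_i}\big)\cdot\frac{\gamma + \tau(\pi_i-\pi_j)}{\gamma - \tau(\pi_i-\pi_j)}$, matching the logarithm appearing in the claimed formula since $t^* = \tfrac{1}{2L}\log x^2$ and $\tau(\pi_j - \pi_i) + \gamma = \gamma - \tau(\pi_i-\pi_j)$.

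Finally I would substitute $t^*$ back into $f$: the two exponential terms contribute $-\tfrac12(\gamma + \tau(\pi_i-\pi_j)) - \tfrac12(\gamma - \tau(\pi_i-\pi_j)) = -\gamma$, giving the $-\gamma$ in the statement; and the linear term $t^*(\pi_i-\pi_j)\tau L$ becomes $\tfrac12\tau(\pi_i-\pi_j)\log x^2 = \tfrac12\tau(\pi_i-\pi_j)\log\big(\tfrac{\pi_j \tilde p}{\pi_i \tilde q}\cdot\frac{\tau(\pi_i-\pi_j)+\gamma}{\tau(\pi_j-\pi_i)+\gamma}\big)$ after incorporating the factor $\tilde p/\tilde q$ correctly into $x^2$; combined with the constant $\pi_i\tilde q + \pi_j\tilde p$ this is precisely the claimed closed form. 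Two loose ends remain to be checked: first, that $f$ is concave in $t$ (each $-\tilde p^{\,t}\tilde q^{\,1-t}$ term is concave, being the negative of a log-convex function, and the rest is linear), so the stationary point is the global maximum over $\mathbb{R}$; and second, that $t^* \in [0,1]$ so the unconstrained optimum is actually attained in the feasible interval --- this should follow from $\tilde q < \tilde p^{\,t^*}\tilde q^{\,1-t^*} < \tilde p$, equivalently $\tilde q < \tfrac12(\gamma + \tau(\pi_i-\pi_j))/\pi_i < \tilde p$, which can be verified directly from the definition of $\gamma$ and the bound $\tilde q < \tau < \tilde p$. The main obstacle is purely the bookkeeping in solving the stationarity equation and simplifying the radical --- there is no conceptual difficulty, but keeping the roles of $i$ and $j$ (and the signs of $\pi_i - \pi_j$) straight through the algebra is where care is needed; everything else is routine calculus.
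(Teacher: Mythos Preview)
Your proposal is correct and follows essentially the same route as the paper's proof: reduce the sum in \eqref{eq:ch-divergence} to the two terms $k\in\{i,j\}$, introduce a substitution (the paper sets $u=t(\log\tilde p-\log\tilde q)$, which is exactly $\log x$ in your notation), solve the resulting quadratic stationarity condition to obtain $\gamma$, and substitute back. One small bookkeeping point to fix: the ratio of your two displayed quantities is $\tilde p^{\,t^*}\tilde q^{\,1-t^*}/\tilde q^{\,t^*}\tilde p^{\,1-t^*}=(\tilde p/\tilde q)^{2t^*-1}=x^2\cdot \tilde q/\tilde p$, not $x^2$ itself, so the missing factor $\tilde p/\tilde q$ you flag at the end enters already at this step; once that is corrected the logarithm matches the claimed formula directly.
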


Note that in our regime (logarithmic average degree), we have
\begin{equation}\label{eq:omega-asymptotics}
\omega = \tau \log n / n + o(\log n / n).
\end{equation}

The divergence expression in Proposition~\ref{prop:lower-bound-closed-form} strongly resembles the lower bound proven in \cite{hwx2} for the case of two communities of different sizes. Indeed, in the notation of Lemma~2 of \cite{hwx2}, we recognize our expression as
$$ g\left( \pi_i,\pi_j,\tilde p,\tilde q,\tau(\pi_j-\pi_i) \right). $$
From that lemma, the following operational definition of the CH-divergence is immediate for the planted partition model:
\begin{lemma}
\label{lemma:divergence-binomial}
Let $i \neq j$ be communities, and let $v$ be a vertex in community $i$. Let $E(v,j)$ denote the number of edges from $v$ to vertices in community $j$. Suppose that $T(n) = \tau (\pi_i - \pi_j) \log n + o(\log n)$. The probability of the tail event $E(v,i) - E(v,j) \leq T(n)$ is $n^{-D_+(i,j) + o(1)}$.
\end{lemma}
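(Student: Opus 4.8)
The plan is to prove Lemma~\ref{lemma:divergence-binomial} directly, treating it as a large-deviations estimate for a difference of two independent binomial random variables, and then to identify the resulting rate function with the closed-form expression in Proposition~\ref{prop:lower-bound-closed-form}. First I would observe that $E(v,i)$ and $E(v,j)$ are independent, with $E(v,i) \sim \mathrm{Bin}(s_i - 1, p)$ and $E(v,j) \sim \mathrm{Bin}(s_j, q)$, where $p = \tilde p \log n/n$ and $q = \tilde q \log n/n$, so that the means are $(\pi_i \tilde p)\log n$ and $(\pi_j \tilde q)\log n$ up to $o(\log n)$. Because the number of trials scales like $n$ while the success probabilities scale like $\log n/n$, each binomial is in the Poissonian regime: $\mathrm{Bin}(m,\lambda/m)$ converges to $\mathrm{Poisson}(\lambda)$. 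So the cleanest route is to first establish that $E(v,i) - E(v,j)$ has the same exponential tail behavior as $Z_i - Z_j$ with $Z_i \sim \mathrm{Poisson}((\pi_i\tilde p)\log n)$, $Z_j \sim \mathrm{Poisson}((\pi_j\tilde q)\log n)$ independent — this Poissonization step is standard but should be stated carefully, ideally by comparing moment generating functions or by a direct coupling, with all $o(\log n)$ errors absorbed into the $n^{o(1)}$ slack.

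Next I would compute the tail probability $\Pr[Z_i - Z_j \le T(n)]$ by a Chernoff bound together with a matching lower bound. Writing $a = (\pi_i\tilde p)\log n$ and $b = (\pi_j\tilde q)\log n$, the MGF of $Z_i - Z_j$ is $\exp(a(e^{-\theta}-1) + b(e^{\theta}-1))$ for $\theta \ge 0$, so $\Pr[Z_i - Z_j \le T] \le \exp\bigl(a(e^{-\theta}-1) + b(e^{\theta}-1) + \theta T\bigr)$, and optimizing over $\theta$ gives the rate. Since $a$, $b$, and $T$ all scale like $\log n$, this yields $\Pr[Z_i - Z_j \le T(n)] = n^{-I + o(1)}$ where $I$ is the Legendre-type exponent obtained from the optimization with $a/\log n = \pi_i\tilde p$, $b/\log n = \pi_j\tilde q$, $T/\log n = \tau(\pi_i - \pi_j)$. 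The matching lower bound on the probability (so that the exponent is exactly $I$, not just an upper bound) follows from a standard tilting/change-of-measure argument, or simply by invoking the Bahadur–Rao or Cramér-type sharp asymptotics; in this regime the polynomial corrections are all swallowed by $n^{o(1)}$.

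The remaining task is the purely algebraic identification: solving the optimization $I = \sup_{\theta \ge 0}\bigl(-\pi_i\tilde p(e^{-\theta}-1) - \pi_j\tilde q(e^{\theta}-1) - \theta\,\tau(\pi_i-\pi_j)\bigr)$ — equivalently setting the derivative to zero, which gives a quadratic in $e^\theta$ whose solution produces the radical $\gamma = \sqrt{\tau^2(\pi_i-\pi_j)^2 + 4\pi_i\pi_j\tilde p\tilde q}$ — and checking that plugging the optimal $\theta^\star$ back in yields exactly the displayed formula for $D_+(i,j)$ in Proposition~\ref{prop:lower-bound-closed-form}. Alternatively, and more economically, I would simply cite Lemma~2 of \cite{hwx2}, which already packages precisely this binomial-difference tail computation: our parameters match their $g(\pi_i,\pi_j,\tilde p,\tilde q,\tau(\pi_j-\pi_i))$ as noted in the text just above the lemma statement, so the operational characterization follows immediately once the correspondence of parameters (in particular $\omega \sim \tau\log n/n$ from \eqref{eq:omega-asymptotics}, which sets the threshold $T(n)$) is spelled out. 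I expect the main obstacle to be not any single step but the bookkeeping: making sure that the various $o(\log n)$ discrepancies — between $s_i$ and $\pi_i n$, between $\mathrm{Bin}$ and $\mathrm{Poisson}$, and in the hypothesis $T(n) = \tau(\pi_i-\pi_j)\log n + o(\log n)$ — all contribute only to the $n^{o(1)}$ factor and do not perturb the exponent $D_+(i,j)$, which requires the rate function to be continuous in all its arguments near the relevant point.
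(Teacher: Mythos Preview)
Your proposal is correct, and your ``economical'' alternative --- citing Lemma~2 of \cite{hwx2} after noting the parameter correspondence $D_+(i,j) = g(\pi_i,\pi_j,\tilde p,\tilde q,\tau(\pi_j-\pi_i))$ --- is precisely what the paper does; indeed the paper treats the lemma as immediate from that citation and gives no further argument. Your primary route, the self-contained large-deviations computation via Poissonization, Chernoff upper bound, and tilted lower bound, is a valid and more explicit alternative that the paper simply outsources to \cite{hwx2}; it buys independence from that reference at the cost of reproducing a standard calculation, and your identification of the bookkeeping issues (the $o(\log n)$ slack in $s_i$ vs.\ $\pi_i n$, in $T(n)$, and in the binomial--Poisson comparison, all absorbed by continuity of the rate function) is exactly the care that would be required.
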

By a naive union bound, when $D_+(i,j) > 1$ for all pairs $i, j$, then we can assert with high probability that none of these tail events occur, over all vertices and communities.

A similar operational interpretation is given directly in \cite{as}, phrased in terms of hypothesis testing between multivariate Poisson distributions. This result keeps more complete track of the $o(1)$ term, so as to guarantee the union bound even when $D_+(i,j) = 1$.

Lastly we note the following monotonicity property of the divergence, with a proof deferred to Appendix~\ref{appendix:divergence-monotonicity}:
\begin{proposition}
\label{prop:divergence-monotonicity}
In the planted partition model, the CH-divergence $D_+(i,j)$ is monotone increasing in $\pi_i$ and $\pi_j$ (for any fixed $\tilde p, \tilde q$).
\end{proposition}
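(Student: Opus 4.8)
The plan is to use the special structure of the planted partition model to collapse the Chernoff--Hellinger divergence to a one-dimensional variational problem in which the dependence on $\pi_i$ and $\pi_j$ is visibly monotone. For $i \neq j$, the summand in (\ref{eq:ch-divergence}) indexed by $k$ vanishes whenever $k \notin \{i,j\}$: there $\tilde Q_{ik} = \tilde Q_{jk} = \tilde q$, so the arithmetic--geometric gap is identically zero. Only the terms $k = i$ and $k = j$ survive, so with
$$ f(t) := t\,\tilde p + (1-t)\,\tilde q - \tilde p^{\,t}\,\tilde q^{\,1-t}, $$
a direct computation shows that the $k = i$ term equals $\pi_i f(t)$ and the $k = j$ term equals $\pi_j f(1-t)$, whence
$$ D_+(i,j) = \sup_{t \in [0,1]} \big(\pi_i\, f(t) + \pi_j\, f(1-t)\big) =: \Phi(\pi_i,\pi_j). $$
In particular $D_+(i,j)$ depends on the vector $\pi$ only through $\pi_i$ and $\pi_j$, so increasing $\pi_i$ with $\pi_j$ held fixed is meaningful even though $\pi$ is constrained to the simplex.

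The crux is a sign fact: by the weighted AM--GM inequality applied to $\tilde p,\tilde q$ with weights $t,1-t$, we have $f(t) \geq 0$ (hence also $f(1-t) \geq 0$) for every $t \in [0,1]$, with strict inequality on $(0,1)$ since $\tilde p \neq \tilde q$. Thus, for each fixed $t$, the map $(\pi_i,\pi_j) \mapsto \pi_i f(t) + \pi_j f(1-t)$ is affine with nonnegative coefficients, so nondecreasing in $\pi_i$ and in $\pi_j$; and a pointwise supremum of nondecreasing functions is nondecreasing. Hence $\Phi$, and therefore $D_+(i,j)$, is nondecreasing in each of $\pi_i$ and $\pi_j$, which already yields the proposition in the weak sense.

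To upgrade to strict monotonicity I would locate the optimizer. Since $t \mapsto \pi_i f(t) + \pi_j f(1-t)$ vanishes at $t \in \{0,1\}$ but is strictly positive on $(0,1)$, its supremum over the compact interval $[0,1]$ is positive and is attained at some interior point $t^\ast \in (0,1)$, where both $f(t^\ast) > 0$ and $f(1-t^\ast) > 0$. Then for any $\pi_i' > \pi_i$, bounding the supremum defining $\Phi(\pi_i',\pi_j)$ below by its value at $t^\ast$ gives
$$ \Phi(\pi_i',\pi_j) \;\ge\; \pi_i'\, f(t^\ast) + \pi_j\, f(1-t^\ast) \;=\; \Phi(\pi_i,\pi_j) + (\pi_i' - \pi_i)\, f(t^\ast) \;>\; \Phi(\pi_i,\pi_j), $$
and the argument for $\pi_j$ is identical. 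I do not anticipate any real obstacle: the only steps needing care are checking that the off-diagonal terms $k \notin \{i,j\}$ genuinely drop out and that the maximizing $t$ is strictly interior (so that the AM--GM gap there is strictly positive), both of which follow from elementary properties of the exponential, with no need to invoke the closed form of Proposition~\ref{prop:lower-bound-closed-form}.
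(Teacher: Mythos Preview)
Your proof is correct and considerably more direct than the paper's. The paper proceeds by first invoking the explicit closed form of Proposition~\ref{prop:lower-bound-closed-form}, writing $D_+(i,j) = \eta(\tilde p,\tilde q,\pi_i,\pi_j)$, exploiting homogeneity and symmetry to reduce to monotonicity of $s \mapsto \eta(\tilde p,\tilde q,s,1)$, and then computing the first and second partial derivatives in $s$; it checks that $\partial^2_s \eta \ge 0$ and that $\lim_{s\to 0}\partial_s \eta \ge 0$, whence $\partial_s \eta \ge 0$ throughout. By contrast, you stay at the variational level: you collapse the sum in (\ref{eq:ch-divergence}) to two terms, observe via weighted AM--GM that both coefficients $f(t)$ and $f(1-t)$ are nonnegative for every $t\in[0,1]$, and then use that a pointwise supremum of functions nondecreasing in a parameter is itself nondecreasing. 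Your upgrade to strict monotonicity via an interior maximizer is also clean. The advantage of your route is that it requires neither the closed form nor any calculus, and it makes the monotonicity conceptually transparent; the paper's computation, though heavier, has the side benefit of exhibiting convexity of $\eta$ in each size parameter, which your argument does not yield directly.
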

Thus, when determining whether exact recovery is feasible in the planted partition model for some set of parameters, it suffices to check the CH-divergence between the two smallest communities.

\section{Semirandom robustness and its consequences}
\label{section:robustness}

\subsection{The semirandom model}

Let $\hat X$ be the ground truth partition of the vertices into communities.
\begin{definition}
Following \cite{fk} we define a {\bf monotone adversary} to be a process which takes as input a graph (for instance, a random graph sampled from the stochastic block model with ground truth $\hat X$) and makes any number of {\bf monotone changes} of the following types:
\begin{itemize}
\item The adversary can add an edge between two vertices in the same community of $\hat X$.
\item The adversary can remove an edge between two vertices in different communities of $\hat X$.
\end{itemize}
\end{definition}
These monotone changes appear to only strengthen the presence of the true community structure $\hat X$ in the observed graph, yet they may destroy statistical properties of the random model. The semirandom model is designed to penalize brittle algorithms that over-rely on specific stochastic models. It does not intend to mimic any real-world adversarial scenario, but it does intend to model the inherent unpredictability of real-world data.

It may help to consider examples of how such an adversary could break an algorithm:
\begin{itemize}
\item Many algorithms perform PCA on the adjacency matrix \cite{lr}. The adversary could plant a slightly denser sub-community structure in a community, thus splitting one cluster of vertices into several nearby sub-clusters in the PCA, and introducing doubt as to which granularity of clustering is appropriate.
\item An adversary could introduce a noise distribution that changes the shape of clusters of vertices in the PCA or spreads them out, resulting in either a failure to cluster correctly, or else a failure in subsequent steps of estimating parameters and improving the community structure (as in \cite{as2}).
\end{itemize}

These are extreme examples, but they correspond to realistic concerns:
\begin{itemize}
\item Real community structure is sometimes hierarchical, e.g.\ tight friend groups within a larger social community.
\item Many real networks have hubs, or a degree distribution that is nowhere near Gaussian, so hypothesis tests designed for distributions of roughly Gaussian shape may have trouble generalizing.
\end{itemize}

\subsection{Robustness}

In this section, we establish that our SDPs are robust to monotone adversaries. We first elaborate on the definitions discussed in the introduction.

\begin{definition}
Suppose $f$ is a (deterministic) algorithm for recovery in the stochastic block model, namely $f$ takes in an adjacency matrix $A$ and outputs a partition $f(A)$ of the vertices. We say $f$ is {\bf robust to monotone adversaries} if: for any $A$ such that $f(A) = \hat X$, we have $f(A') = \hat X$ for any $A'$ obtained from $A$ via a sequence of monotone changes.
\end{definition}
We modify this definition slightly for SDPs in order to deal with the fact that an SDP may not have a unique optimum. By abuse of notation, let $\hat X$ also refer to the centered partition matrix corresponding to the partition $\hat X$.
\begin{definition}
Suppose $P_A$ is a semidefinite program which depends on the adjacency matrix $A$ (for instance, Program \ref{prog:known-primal} or Program \ref{prog:unknown-primal}). We say $P_A$ is {\bf robust to monotone adversaries} if: for any $A$ such that $\hat X$ is the unique optimum to $P_A$, we have that $\hat X$ is the unique optimum to $P_{A'}$ for any $A'$ obtained from $A$ via a sequence of monotone changes.
\end{definition}

SDPs tend to possess such robustness properties. We will now show that our SDPs are no exception, following roughly the same type of argument as \cite{fk}.

\begin{proposition}\label{prop:robustness}
Programs \ref{prog:known-primal} (known sizes) and \ref{prog:unknown-primal} (unknown sizes) are robust to monotone adversaries.
\end{proposition}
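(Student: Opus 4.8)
The plan is to exploit the fact that a monotone change only adds edges inside communities (where the centered partition matrix $\hat X$ has entry $+1$) and removes edges between communities (where $\hat X$ has entry $-1/(r-1)$), so that such a change increases $\langle A, \hat X\rangle$ by at least as much as it increases $\langle A, X\rangle$ for any competing feasible $X$. Concretely, suppose $\hat X$ is the unique optimum of $P_A$, and let $A'$ be obtained from $A$ by a single monotone change; by induction it suffices to treat one change at a time. Write $A' = A + \Delta$, where $\Delta = e_u e_v^\top + e_v e_u^\top$ if we add an edge $\{u,v\}$ with $u,v$ in the same community, and $\Delta = -(e_u e_v^\top + e_v e_u^\top)$ if we remove an edge $\{u,v\}$ with $u,v$ in different communities.

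First I would observe that the feasible regions of Programs~\ref{prog:known-primal} and \ref{prog:unknown-primal} do not depend on $A$, so feasibility is preserved when we pass from $A$ to $A'$. Next, for the known-sizes program the objective is $\langle A, X\rangle$; for the unknown-sizes program it is $\langle A, X\rangle - \omega\langle J, X\rangle$, and since the $-\omega\langle J,X\rangle$ term is unaffected by the change in $A$, in both cases it suffices to track how $\langle A, X\rangle$ changes. The key computation is that for any feasible $X$ (with $X_{uu}=1$ and $-1/(r-1) \le X_{uv} \le 1$),
\begin{align*}
\langle A', X\rangle - \langle A, X\rangle &= \langle \Delta, X\rangle = 2 X_{uv}\,\sigma,
\end{align*}
where $\sigma = +1$ for an added same-community edge and $\sigma = -1$ for a removed cross-community edge. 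In the first case this gain is $2X_{uv} \le 2 = 2\hat X_{uv}$, with equality only when $X_{uv} = \hat X_{uv} = 1$; in the second case the gain is $-2X_{uv} \le \tfrac{2}{r-1} = -2\hat X_{uv}$, with equality only when $X_{uv} = \hat X_{uv} = -1/(r-1)$. So the increase in objective from the monotone change is maximized, over all feasible $X$, precisely by (candidates agreeing with) $\hat X$ at the entry $(u,v)$.

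From here I would conclude as follows. Let $X^\star$ be any optimum of $P_{A'}$. Then
\begin{align*}
\langle A', X^\star\rangle &= \langle A, X^\star\rangle + \langle \Delta, X^\star\rangle \le \langle A, \hat X\rangle + \langle \Delta, \hat X\rangle = \langle A', \hat X\rangle,
\end{align*}
using optimality of $\hat X$ for $P_A$ in the first inequality and the entrywise bound $\langle\Delta,X^\star\rangle \le \langle\Delta,\hat X\rangle$ just established in the second (the unknown-sizes case is identical after adding $-\omega\langle J,\cdot\rangle$ to both sides). Hence $\hat X$ is an optimum of $P_{A'}$. For uniqueness, note that equality throughout forces both $\langle A, X^\star\rangle = \langle A, \hat X\rangle$ (so $X^\star$ is an optimum of $P_A$, whence $X^\star = \hat X$ by the assumed uniqueness) and already this gives the claim. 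Iterating over the finitely many monotone changes completes the argument. I do not anticipate a serious obstacle here; the only point requiring a little care is bookkeeping the two cases of $\Delta$ and confirming that the $\omega\langle J,X\rangle$ term genuinely drops out, which it does since $J$ does not involve $A$.
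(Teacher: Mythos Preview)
Your proof is correct and follows essentially the same approach as the paper's: induct on single monotone changes, note the feasible region is independent of $A$, and use the entrywise bounds $-1/(r-1)\le X_{uv}\le 1$ to show the objective gain $\langle\Delta,X\rangle$ is maximized at $\hat X$, with uniqueness inherited from strict optimality under $P_A$. The only small omission is that you assert $X_{uv}\le 1$ without justification; the paper notes explicitly that this follows from $X\succeq 0$ together with $X_{uu}=X_{vv}=1$ (the $2\times 2$ principal minor is nonnegative).
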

\begin{proof}
Let $P_A$ be either Program \ref{prog:known-primal} or Program \ref{prog:unknown-primal} (the proof is identical for both cases). Suppose the true centered partition matrix $\hat X$ is the unique optimum for $P_A$. By induction it is sufficient to show that $\hat X$ is the unique optimum for $P_{A'}$ where $A'$ is obtained from $A$ via a single monotone change. Note that $P_A$ and $P_{A'}$ have the same feasible region because $A$ only affects the objective function. Let $P_A(X)$ denote the objective value of a candidate solution $X$ for $P_A$, namely $P_A(X)$ is $\langle A,X \rangle$ for Program \ref{prog:known-primal} and $\langle A,X \rangle - \omega \langle J,X \rangle$ for Program \ref{prog:unknown-primal}. First consider the case where $A'$ is obtained from $A$ via a single monotone edge-addition step. Since the added edge lies within a community of $\hat X$ we have $P_{A'}(\hat X) = P_A(\hat X) + 2$. For any matrix $X$ feasible for $P_A$ (equivalently, feasible for $P_{A'}$), we have $P_{A'}(X) \le P_A(X) + 2$; this follows from $X \le 1$ (entry-wise), which is implied by the constraints $X_{vv} = 1$ and $X \succeq 0$. If $X \ne \hat X$ we have $P_{A'}(\hat X) = P_A(\hat X) + 2 > P_A(X) + 2 \ge P_{A'}(X)$ and so $\hat X$ is the unique optimum of $P_{A'}$. Similarly, for the case where $A'$ is obtained from $A$ via a single monotone edge-removal, we have $P_{A'}(\hat X) = P_A(\hat X) + \frac{2}{r-1}$ and $P_{A'}(X) \le P_A(X) + \frac{2}{r-1}$ (using the constraint $X \ge \frac{-1}{r-1}$) and the result follows.
\end{proof}

Recall that in the semirandom planted partition model, a random graph is generated according to the random (planted partition) model and then a monotone adversary is allowed to make monotone changes. Once we have established our main result (Theorem~\ref{thm:recovery}) on the success of our SDPs against the random model, it is an immediate corollary of robustness that our SDPs also succeed against the semirandom model.

\begin{proposition}\label{prop:semirandom-recovery}
Programs~\ref{prog:known-primal} (known sizes) and~\ref{prog:unknown-primal} (unknown sizes) achieve exact recovery against the \emph{semirandom} planted partition model, with probability $1 - o(1)$, up to the information-theoretic threshold for the random model (given in Theorem~\ref{thm:as-lower-bounds}).
\end{proposition}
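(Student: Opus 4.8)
The plan is to derive this proposition directly from the two results already established: the exact-recovery guarantee for the random model (Theorem~\ref{thm:recovery}) and the robustness guarantee (Proposition~\ref{prop:robustness}). The argument is short, because all the genuine work has been isolated into those two statements.

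First, I would fix problem parameters $(\tilde p, \tilde q, \pi)$ in the information-theoretically feasible range of Theorem~\ref{thm:as-lower-bounds}, let $A$ be a graph drawn from the random planted partition model with ground-truth partition $\hat X$, and let $P_A$ denote whichever of Program~\ref{prog:known-primal} or Program~\ref{prog:unknown-primal} is under consideration. Define $\mathcal{E}$ to be the event that the centered partition matrix $\hat X$ is the unique optimum of $P_A$. By Theorem~\ref{thm:recovery}, $\Pr[\mathcal{E}] = 1 - o(1)$.

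Next, let $A'$ be any graph obtained from $A$ by a monotone adversary, that is, by a finite sequence of within-community edge additions and between-community edge removals. On the event $\mathcal{E}$, Proposition~\ref{prop:robustness} --- whose statement already covers arbitrary sequences of monotone changes --- guarantees that $\hat X$ remains the unique optimum of $P_{A'}$. This conclusion is uniform over all such $A'$, so it does not depend on how the adversary selects its modifications (adaptively, randomly, or not at all). Hence, conditioned on $\mathcal{E}$, the SDP applied to the semirandom instance recovers exactly $\hat X$, and therefore it does so unconditionally with probability at least $\Pr[\mathcal{E}] = 1 - o(1)$.

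The only subtlety is that the probability is taken over the random draw of $A$ while the adversary is worst-case; this causes no difficulty, since robustness is a deterministic ``for all $A'$'' statement, so conditioning on $\mathcal{E}$ makes success certain irrespective of the adversary's choices. There is thus no real obstacle to this proposition: it is an immediate corollary of Theorem~\ref{thm:recovery} and Proposition~\ref{prop:robustness}, with the entire weight of the semirandom result carried by the proof of Theorem~\ref{thm:recovery}.
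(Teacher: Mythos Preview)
Your proposal is correct and matches the paper's approach exactly: the paper states explicitly that this proposition is ``an immediate corollary of robustness'' once Theorem~\ref{thm:recovery} is established, and does not even write out a separate proof. Your added remark about the probability being over the random draw while the adversary is worst-case is a welcome clarification, but otherwise there is nothing to add.
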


\subsection{BM-ordering and strongly assortative block models}

\newcommand{\SBM}{\mathrm{SBM}}
\newcommand{\PPM}{\mathrm{PPM}}

Let $\SBM(n,\tilde Q,\pi)$ and $\PPM(n,\tilde p,\tilde q,\pi)$ denote the stochastic block model and planted partition model, respectively. Given input from a stochastic block model $\SBM(n,\tilde Q,\pi)$, we can simulate certain other stochastic block models $\SBM(n,\tilde Q',\pi)$, by adding edges within communities independently at random, and likewise removing edges between communities. Specifically, we can simulate any block model for which $\tilde Q_{ii}' \ge \tilde Q_{ii}$ and $\tilde Q_{ij}' \le \tilde Q_{ij}$, for all communities $i \neq j$; in this case, following \cite{al}, we say that $\SBM(n,\tilde Q',\pi)$ dominates $\SBM(n,\tilde Q,\pi)$ in block model ordering ({\bf BM-ordering}).

In the case when the original model is a planted partition model, this simulation step can be thought of as a specific monotone adversary. Block models that dominate a planted partition model will fall among the {\bf strongly assortative} class: those for which $\tilde Q_{ii} > \tilde Q_{jk}$ whenever $j \neq k$, i.e.\ all intra-community probabilities exceed all inter-community probabilities. The following is immediate from Proposition~\ref{prop:semirandom-recovery}:
\begin{proposition}
Programs~\ref{prog:known-primal} and~\ref{prog:unknown-primal} achieve exact recovery with high probability against any strongly assortative block model that dominates a planted partition model lying within the information-theoretically feasible range. 
\end{proposition}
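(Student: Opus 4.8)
\section*{Proof proposal}

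The plan is to realize any sample from the dominating strongly assortative model as the output of a (randomized) monotone adversary applied to a sample from the planted partition model, and then quote Proposition~\ref{prop:semirandom-recovery}. The real work is only to build the coupling; the mathematical substance lives entirely in Theorem~\ref{thm:recovery} and the robustness argument, so I do not expect a genuine obstacle here --- this proposition is essentially a repackaging of Proposition~\ref{prop:semirandom-recovery} in BM-ordering language.

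Concretely, fix $\tilde p, \tilde q, \pi$ with $D_+(i,j) > 1$ for every $i \ne j$ in $\PPM(n,\tilde p,\tilde q,\pi)$, and let $\tilde Q'$ dominate it in BM-ordering, i.e.\ $\tilde Q'_{ii} \ge \tilde p$ and $\tilde Q'_{ij} \le \tilde q$ for $i \ne j$. (Such $\tilde Q'$ is automatically strongly assortative, since $\tilde Q'_{ii} \ge \tilde p > \tilde q \ge \tilde Q'_{jk}$ for $j \ne k$; thus the ``strongly assortative'' qualifier in the statement is a classification, not something to be invoked.) I would sample $G \sim \PPM(n,\tilde p,\tilde q,\pi)$ with ground truth $\hat X$, and then, independently over unordered pairs: for each intra-community non-edge (say $u,v \in S_i$) add the edge with the probability that makes the resulting edge probability equal to $\tilde Q'_{ii}\log n/n$, and for each inter-community edge (say $u\in S_i$, $v \in S_j$ with $i \ne j$) delete it with the probability that makes the surviving edge probability equal to $\tilde Q'_{ij}\log n/n$. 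A one-line computation shows these probabilities are well-defined --- they lie in $[0,1]$ for $n$ large, since $\log n/n \to 0$, $\tilde Q'_{ii} \ge \tilde p$, and $\tilde Q'_{ij} \le \tilde q$ --- and, because $\PPM$ edges are independent, the resulting graph $G'$ has exactly the law of $\SBM(n,\tilde Q',\pi)$. The key point is that this operation only adds intra-community edges and only removes inter-community edges of $\hat X$, so it is a monotone adversary in the sense of Section~\ref{section:robustness} --- a randomized one, which is still covered by the definitions there.

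Now I would invoke Proposition~\ref{prop:semirandom-recovery} by unpacking its proof: Theorem~\ref{thm:recovery} supplies an event $\mathcal{G}$ over $\PPM$ samples with $\Pr[\mathcal{G}] = 1 - o(1)$ on which $\hat X$ is the unique optimum of Program~\ref{prog:known-primal} (resp.\ Program~\ref{prog:unknown-primal}), and Proposition~\ref{prop:robustness} says that on $\mathcal{G}$ this remains true after any monotone changes whatsoever. Hence on $\mathcal{G}$ the SDP succeeds on $G'$ for every realization of the adversary's coins, so $\Pr[\text{SDP succeeds on } G'] \ge \Pr[\mathcal{G}] = 1 - o(1)$; since $G' \sim \SBM(n,\tilde Q',\pi)$, this is exactly the claim. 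The only points deserving care are (i) verifying that the coupling reproduces the target distribution exactly --- the short probability computation above --- and (ii) noting that conditioning on the adversary's randomness does not disturb the ``with high probability over the random graph'' guarantee, which holds because the good event $\mathcal{G}$ depends only on $G$ while monotonicity holds for every outcome of the adversary's coins (a Fubini argument).
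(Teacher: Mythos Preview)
Your proposal is correct and matches the paper's own approach: the paper states that the proposition is ``immediate from Proposition~\ref{prop:semirandom-recovery},'' having noted in the preceding paragraph that the simulation step (independently adding intra-community edges and removing inter-community edges to convert a $\PPM$ sample into an $\SBM(n,\tilde Q',\pi)$ sample) is a specific monotone adversary. Your write-up simply makes this coupling and the Fubini-type argument explicit, which the paper leaves implicit.
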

Extrapolating slightly, we can assert that this extension to the strongly assortative block model tends to be automatic for semidefinite approaches because they tend to be robust. This has been previously noted by \cite{al} through essentially the same arguments. By taking $\tilde p = \min_i \tilde Q_{ii}$ and $\tilde q = \max_{j \neq k} \tilde Q_{jk}$, we could obtain from Proposition~\ref{prop:lower-bound-closed-form} a more explicit description of this sufficient condition for exact recovery.

\subsection{Difficulties with general block models}

Most natural SDPs tend to be robust to monotone adversaries. This strength of semidefinite approaches -- their ability to adapt to other random models following BM-ordering -- can be used to also reveal their limitations. We will show in this section that it is impossible for an algorithm robust to monotone changes to match the information-theoretic lower bound of \cite{as} in general, and even for strongly assortative block models.

As an example, define
$$ \tilde Q_1 = \left( \begin{array}{ccc}
a            & b & c + \epsilon \\
b            & a & c \\
c + \epsilon & c & a \end{array} \right), \qquad
\tilde Q_2 = \left( \begin{array}{ccc}
a & b & c \\
b & a & c \\
c & c & a \end{array} \right). $$
For a suitable setting of $\epsilon > 0$ and $a,b,c,\pi$, it is possible for $\SBM(n,\tilde Q_1,\pi)$ to be information-theoretically feasible for exact recovery while $\SBM(n,\tilde Q_2,\pi)$ is not. The following values provide an explicit example:
$$ a = 31.4,\quad b = 15,\quad c = 10,\quad \epsilon = 1,\quad \pi = (1/3,1/3,1/3). $$
However, $\tilde Q_1$ dominates $\tilde Q_2$ in BM-ordering. As our SDPs cannot hope to recover against $\SBM(n,\tilde Q_2,\pi)$, it follows that they fail also against $\SBM(n,\tilde Q_1,\pi)$, even though this model is information-theoretically feasible.

This example shows how monotone changes become subtly unhelpful in the strongly assortative block model: the first two communities become harder to distinguish under these monotone changes because their interactions with the third community become more similar. Arguably this makes the semirandom model inappropriate for such a general block model. Nonetheless, monotone robustness is a property of our SDPs, and also of all prior SDPs in the community detection literature (at least after minor strengthening), and so the limitations below apply at least to these specific SDPs. Thus we are able to learn about the limitations of these SDPs by studying their robustness properties.

The argument above applies to any algorithm that is robust to the semirandom model; this means no robust algorithm can achieve the threshold. This motivates us to conjecture a different ``monotone threshold'' for general block models, which we believe captures the information-theoretic limits in the general semirandom model. Define the {\bf monotone divergence}
$$ D_+^m(i,j) = \sup_{t \in [0,1]} \sum_{k \in \{i,j\}} \pi_k (t\, \tilde Q_{ik} + (1-t) \tilde Q_{jk} - \tilde Q_{ik}^t \tilde Q_{jk}^{1-t}). $$

Note that $D_+^m(i,j)$ is simply the value of $D_+(i,j)$ after setting $Q_{ik} = Q_{jk}$ for all $k \notin \{i,j\}$; this is a change in model that the monotone adversary can simulate (for instance set $Q_{ik} = Q_{jk} = 0$), and it is in fact the best change-in-model that the adversary can simulate if it wants to decrease $D_+(i,j)$ as much as possible for a specific $(i,j)$ pair. It follows that if $D_+^m(i,j) < 1$ for some $i \ne j$ then there does not exist a robust algorithm achieving exact recovery. We conjecture that conversely, if $D_+^m(i,j) \ge 1$ for all $i \ne j$, and if the block model is furthermore {\bf weakly assortative} ($Q_{ii} > Q_{ij}$ for all $i \neq j$), then there exists a robust algorithm achieving exact recovery against this block model.

\subsection{Difficulties with unknown parameters}

Non-semidefinite techniques in \cite{as2} achieve exact recovery up to the threshold without knowing any of the model parameters. One might ask whether it is possible for a robust algorithm (such as our SDPs) to achieve this; we now argue that this is not possible in general even in the planted partition model.

Consider for example a strongly assortative block model $\SBM(n,\tilde Q,\pi)$, on four communities, where
$$ \tilde Q = \left( \begin{array}{cccc}
a & b & c & c \\
b & a & c & c \\
c & c & a & b \\
c & c & b & a
\end{array} \right) $$
and $a > b > c$. This model may be simulated by a monotone adversary acting on either of the planted partition models $\PPM(n,a,b,\pi)$ and $\PPM(n,b,c,\pi')$, where $\pi' = (\pi_1 + \pi_2, \pi_3 + \pi_4)$. 
Suppose that we had a robust algorithm for exact recovery in the planted partition model without knowing any of the parameters. For a suitable setting of $a,b,c,\pi$, such an algorithm should be able to achieve exact recovery against the two planted partition models listed above. By robustness, it will still recover the same partition, with high probability, when presented with the strongly assortative block model $\SBM(n,\tilde Q,\pi)$. But now we have a contradiction: the algorithm allegedly recovers both partitions (corresponding to $\pi$ and $\pi'$) with high probability.

In effect, these two planted partition models have zero ``monotone total variation distance'', though we do not formalize this notion here. It is necessary to know some model parameters in advance in order for robust algorithms to distinguish such models. A few approaches are available to overcome this drawback:
\begin{itemize}

\item One could statistically estimate some or all of the parameters before running the SDP, as in Appendix~B of \cite{hwx2}. However, this statistical approach relies on the specific random model and spoils our robustness guarantees.

\item One could try running the SDP several times on a range of possible input parameters, ignoring any returned solutions that are not partition matrices. A close reading of Section~\ref{section:proof} reveals that, when running Program~\ref{prog:unknown-primal} (unknown sizes), mis-guessing the parameter $\omega$ by any $1-o(1)$ factor does not affect whether one succeeds with high probability.

This approach may return several valid solutions. In the example above, this approach will recover both of the given planted partition models, with high probability.
In general this approach recovers the type of hierarchical community structure that the above example exhibits.

\end{itemize}

\section{Proof of exact recovery}
\label{section:proof}

In this section we prove our main result (Theorem~\ref{thm:recovery}) which states that our SDPs achieve exact recovery against the planted partition model, up to the information-theoretic limit. Specifically, we show that if the divergence condition (\ref{eq:divergence-condition}) holds, then with high probability, the true centered partition matrix $\hat X$ is the unique optimum for our SDPs. The main idea of the proof is to construct a solution to the dual SDP in order to bound the value of the primal.

\subsection{Notation}
Recall that we use the letters $u,v$ for vertices and the letters $i,j$ for communities. We let $\one$ denote the all-ones vector, $\one_i$ denote the indicator vector of $S_i$, $I$ denote the identity matrix, and $J$ denote the all-ones matrix. When $M$ is any matrix, $M_{S_i S_j}$ will denote the submatrix indexed by $S_i \times S_j$, and we abbreviate $M_{S_i S_i}$ by $M_{S_i}$.

Let $A$ be the adjacency matrix of the observed graph and write $E(i,j) = \one_i^\top A \one_j$; when $i \ne j$ this is the number of edges between communities $i$ and $j$, and when $i = j$ this is twice the number of edges within community $i$.

All asymptotic notation refers to the limit $n \to \infty$, with parameters $\tilde p$, $\tilde q$, and $\pi$ held fixed. Throughout, we say an event occurs ``with high probability'' if its probability is $1 - o(1)$.

\subsection{Weak duality}

Following the standard dual certificate approach, we begin by writing down duality and complementary slackness for our SDP. This will lead us to a set of sufficient conditions, outlined in Proposition~\ref{prop:dual-cert} below.

We first write down the dual of Program~\ref{prog:unknown-primal}:
\begin{program}[Dual, unknown sizes]
\label{prog:unknown-dual}
\begin{align*}
\text{minimize} \quad & \sum_v \nu_v + \frac{1}{r-1} \langle J, \Gamma \rangle \\
\text{subject to} \quad & \Lambda \stackrel{\mathrm{def}}{=} \diag(\nu) + \omega J - A - \Gamma \succeq 0, \\
    & \Gamma \ge 0 \quad \text{symmetric}.
\end{align*}
\end{program}
Here the $n$-vector $\nu$ and the $n \times n$ matrix $\Gamma$ (both indexed by vertices) are dual variables, and $\omega$ is as defined in (\ref{eq:omega-definition}). We can now state weak duality in this context:
\begin{align*}
\langle A,X \rangle - \omega \langle J,X \rangle &= \langle A - \omega J, X \rangle = \langle \diag(\nu) - \Gamma - \Lambda, X \rangle \\
& = \sum_v \nu_v - \langle \Gamma, X \rangle - \langle \Lambda, X \rangle \\
& = \sum_v \nu_v + \frac{1}{r-1} \langle J, \Gamma \rangle - \langle \Gamma, X+\frac{1}{r-1} J \rangle - \langle \Lambda, X \rangle.
\end{align*}

This implies weak duality $\langle A,X \rangle - \omega \langle J,X \rangle \le \sum_v \nu_v + \frac{1}{r-1} \langle J, \Gamma \rangle$ (the primal objective value is at most the dual objective value) because $\langle \Gamma, X+\frac{1}{r-1}J \rangle \ge 0$ (since $\Gamma \ge 0$, $X + \frac{1}{r-1} J \ge 0$) and $\langle \Lambda, X \rangle \ge 0$ (since $\Lambda \succeq 0, X \succeq 0$).

\subsection{Complementary slackness}

From above we have the following complementary slackness conditions. If $X$ is primal feasible and $(\nu,\Gamma)$ is dual feasible then $X$ and $(\nu,\Gamma)$ have the same objective value if and only if $\langle \Gamma, X+\frac{1}{r-1}J \rangle = 0$ and $\langle \Lambda, X \rangle = 0$. Since $\Lambda$ and $X$ are PSD, $\langle \Lambda, X \rangle = 0$ is equivalent to $\Lambda X = 0$ (this can be shown using the rank-1 decomposition of PSD matrices), which in turn is equivalent to $\mathrm{colspan}(X) \subseteq \ker(\Lambda)$.

Although we have only considered Program~\ref{prog:unknown-primal} so far, everything we have done also applies to Program~\ref{prog:known-primal}. The dual of Program~\ref{prog:known-primal} is identical to Program~\ref{prog:unknown-dual}, except that $\omega$ is replaced by a dual variable, and there is a corresponding term in the objective. By deterministically choosing this dual variable to take the value $\omega$, we arrive at a dual program with the same feasible region and complementary slackness conditions as Program~\ref{prog:unknown-dual}. From this point onward, the same arguments apply to both Programs \ref{prog:known-primal} and \ref{prog:unknown-primal}.

Let $\hat X$ be the true centered partition matrix with $(1,\frac{-1}{r-1})$ entries. The following proposition gives a sufficient condition for $\hat X$ to be the unique optimum for Programs \ref{prog:known-primal} and \ref{prog:unknown-primal}.

\begin{proposition}
\label{prop:dual-cert}
Suppose there exists a dual solution $(\nu,\Gamma)$ satisfying:
\begin{itemize}
\item $\Lambda \succeq 0$,
\item $\Gamma_{S_i} = 0$ for all $i$,
\item $\Gamma_{S_i S_j} > 0$ (entry-wise) for all $i \ne j$,
\item $\ker(\Lambda) = \mathrm{span}\{\one_i - \one_j\}_{i,j}$.
\end{itemize}
Then $\hat X$ is the unique optimum for Programs \ref{prog:known-primal} and \ref{prog:unknown-primal}. (Here $\Lambda$ is defined as $\Lambda = \diag(\nu) + \omega J - A - \Gamma$ as in Program \ref{prog:unknown-dual}.)
\end{proposition}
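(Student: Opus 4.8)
The plan is to run the textbook dual-certificate argument on top of the weak-duality identity already established. First I would check that $\hat X$ is primal feasible: $\hat X_{uu} = 1$ and $\hat X \ge -\tfrac{1}{r-1}$ entrywise are immediate, and for Program~\ref{prog:known-primal} the constraint $\langle J, \hat X\rangle = \tfrac{r}{r-1}\sum_i s_i^2 - \tfrac{1}{r-1}n^2$ is a direct count (the block $S_i\times S_i$ contributes $s_i^2$ ones, and every remaining entry equals $-\tfrac{1}{r-1}$, of which there are $n^2 - \sum_i s_i^2$). I would then check that $(\nu,\Gamma)$ is dual feasible for Program~\ref{prog:unknown-dual}: the hypotheses give $\Lambda\succeq 0$ outright, while $\Gamma_{S_i}=0$ together with $\Gamma_{S_iS_j}>0$ for $i\ne j$ say exactly that $\Gamma$ is symmetric and entrywise nonnegative.

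The heart of the optimality half is to verify that the two complementary-slackness products vanish at $(\hat X,\nu,\Gamma)$. For the $\Gamma$ term, note that $\hat X + \tfrac{1}{r-1}J$ equals $\tfrac{r}{r-1}$ on each diagonal block $S_i\times S_i$ and $0$ on every off-diagonal block, so $\langle\Gamma,\hat X+\tfrac{1}{r-1}J\rangle = \tfrac{r}{r-1}\sum_i\langle\Gamma_{S_i},J_{S_i}\rangle = 0$ since each $\Gamma_{S_i}=0$. For the $\Lambda$ term, I would invoke the fact (noted in the text) that for PSD matrices $\langle\Lambda,\hat X\rangle = 0$ is equivalent to $\mathrm{colspan}(\hat X)\subseteq\ker(\Lambda)$, and then identify the column span of the centered partition matrix: for $u\in S_i$ one computes $\hat X e_u = \tfrac{1}{r-1}(r\one_i - \one) = \tfrac{1}{r-1}\sum_j(\one_i - \one_j)$, and as $i$ ranges over $\{1,\dots,r\}$ these vectors span the $(r-1)$-dimensional space $\mathrm{span}\{\one_i-\one_j\}_{i,j}$. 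The fourth hypothesis then gives $\mathrm{colspan}(\hat X) = \ker(\Lambda)$ exactly, so $\langle\Lambda,\hat X\rangle = 0$. By the weak-duality identity, $\hat X$ and $(\nu,\Gamma)$ have equal objective values, hence $\hat X$ is primal-optimal.

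For uniqueness, let $X^\ast$ be any optimal primal solution. It attains the dual value of $(\nu,\Gamma)$, so both slackness products must vanish at $X^\ast$ as well. From $\langle\Lambda,X^\ast\rangle = 0$ we get $\mathrm{colspan}(X^\ast)\subseteq\ker(\Lambda) = \mathrm{span}\{\one_i-\one_j\}_{i,j}$; in particular every column of $X^\ast$ is constant on each community, and combined with the symmetry of $X^\ast$ this forces $X^\ast$ to be block-constant, i.e.\ $X^\ast|_{S_i\times S_j} = M_{ij}J_{S_iS_j}$ for a symmetric $M\in\RR^{r\times r}$. From $\langle\Gamma,X^\ast+\tfrac{1}{r-1}J\rangle = 0$ together with entrywise nonnegativity of both factors (using $X^\ast\ge-\tfrac{1}{r-1}$), every entry with $\Gamma_{uv}>0$ must satisfy $X^\ast_{uv} = -\tfrac{1}{r-1}$; since $\Gamma_{S_iS_j}>0$ for $i\ne j$, this gives $M_{ij} = -\tfrac{1}{r-1}$ off the diagonal, and the constraint $X^\ast_{uu} = 1$ forces $M_{ii} = 1$. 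Hence $M$, and therefore $X^\ast$, coincides with $\hat X$.

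Since the dual of Program~\ref{prog:known-primal} reduces to Program~\ref{prog:unknown-dual} after fixing its extra dual variable to $\omega$ (as noted in the text), the identical argument covers both programs. The only step needing any care is the block-constant deduction in the uniqueness half, plus the exact identification of $\mathrm{colspan}(\hat X)$; everything else is routine weak-duality bookkeeping. The genuine difficulty of the paper — exhibiting a dual solution $(\nu,\Gamma)$ that meets all four hypotheses with high probability over the random graph — lies not in this proposition but in the construction carried out in the remainder of Section~\ref{section:proof}.
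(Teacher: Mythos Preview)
Your proposal is correct and follows essentially the same dual-certificate argument as the paper: verify feasibility, check that both complementary-slackness products vanish at $(\hat X,\nu,\Gamma)$ via the column-span identification and the support of $\hat X+\tfrac{1}{r-1}J$, and then run complementary slackness backwards for uniqueness. The only cosmetic difference is in the uniqueness step: the paper first uses $\Gamma_{S_iS_j}>0$ to pin down the off-diagonal blocks and then invokes the column-span inclusion to force $X_{S_i}=J$, whereas you first deduce block-constancy from the column-span inclusion plus symmetry and then read off the block values from $\Gamma_{S_iS_j}>0$ and $X^\ast_{uu}=1$; both orderings work.
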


\begin{proof}
The first three assumptions imply that $(\nu,\Gamma)$ is dual feasible. The column span of $\hat X$ is $\mathrm{span}\{r \one_i - \one\}_i = \mathrm{span}\{\one_i - \one_j\}_{i,j}$ so the fourth assumption is that $\mathrm{colspan}(\hat X) = \ker(\Lambda)$, which is one of our two complementary slackness conditions. The assumption $\Gamma_{S_i} = 0$ implies $\langle \Gamma, \hat X + \frac{1}{r-1}J \rangle = 0$ because $\hat X + \frac{1}{r-1}J$ is supported on the diagonal blocks. This is the other complementary slackness condition, so complementary slackness holds, certifying that $\hat X$ is primal optimal $(\nu,\Gamma)$ is dual optimal.

To show uniqueness, suppose $X$ is any optimal primal solution. By complementary slackness, $\mathrm{colspan}(X) \subseteq \ker(\Lambda) = \mathrm{span}\{\one_i - \one_j\}_{i,j}$ and $\langle \Gamma, X+\frac{1}{r-1}J \rangle = 0$. Since $\Gamma_{S_i S_j} > 0$, this means $X_{S_i S_j} = \frac{-1}{r-1} J$ for all $i \ne j$. But since every column of $X$ is in $\mathrm{span}\{\one_i - \one_j\}_{i,j}$, we must now have $X_{S_i} = J$ and so $X = \hat X$.
\end{proof}

We note that Proposition~\ref{prop:dual-cert} is not novel in that all the arguments we have made so far are standard in the dual certificate approach.


\subsection{Construction of dual certificate -- overview}

We now explore the space of dual certificates that will satisfy the conditions of Proposition~\ref{prop:dual-cert}, so as to sound out how to construct such a certificate. The main result of this section is to rewrite the problem in terms of a new set of variables $\gamma_u$. We believe this change of variables is novel, and it is crucial to our approach because it will allow us to make a connection between between complementary slackness and certain differences of binomial variables that are closely related to the information-theoretic threshold (see Lemma~\ref{lemma:divergence-binomial}).

The condition $\mathrm{span}\{\one_i - \one_j\} \subseteq \ker(\Lambda)$ is equivalent to
$$\forall u, \forall i \ne j \quad e_u^\top \Lambda (\one_i - \one_j) = 0.$$
Using the definition $\Lambda = \diag(\nu) + \omega J - A - \Gamma$, this can be rewritten as the two equations
\begin{equation}
\label{eq:cs1}
\forall i \ne j, \forall u \in S_i \quad \nu_u + \omega (s_i-s_j) - E(u,i) + E(u,j) + \sum_{v \in S_j} \Gamma_{uv} = 0
\end{equation}
and
\begin{equation}
\label{eq:cs2}
\forall i \ne j, \forall u \notin S_i, u \notin S_j \quad \omega(s_i - s_j) - E(u,i) + E(u,j) - \sum_{v \in S_i} \Gamma_{uv} + \sum_{v \in S_j} \Gamma_{uv} = 0.
\end{equation}

We can disregard the equations (\ref{eq:cs2}) because they are implied by the equations (\ref{eq:cs1}) via subtraction. From (\ref{eq:cs1}) we have that, for any fixed $u \in S_i$, the quantity $\omega s_j - E(u,j) - \sum_{v \in S_j} \Gamma_{uv}$ must be independent of $j$ (for $j \ne i$). Hence let us define
\begin{equation}
\label{eq:gamma}
\gamma_u = \omega s_j - E(u,j) - \sum_{v \in S_j} \Gamma_{uv} \quad \forall j \ne i \quad \text{for } u \in S_i.
\end{equation}
Rewrite (\ref{eq:cs1}) as
\begin{equation}
\label{eq:nu}
\nu_u = E(u,i) - \omega s_i + \gamma_u \quad \text{for } u \in S_i,
\end{equation}
and rewrite (\ref{eq:gamma}) as
\begin{equation}
\label{eq:R}
R_{uj} = \omega s_j - E(u,j) - \gamma_u \quad \text{for } u \notin S_j,
\end{equation}
where $R_{uj}$ is shorthand for the row sum $\sum_{v \in S_j} \Gamma_{uv}$. Since $\Gamma$ is symmetric, $R_{uj}$ must be equal to the column sum $\sum_{v \in S_j} \Gamma_{vu}$, so we need for any $i \ne j$,
$$\sum_{u \in S_i} R_{uj} = \sum_{v \in S_j} R_{vi},$$
or equivalently:
$$\sum_{u \in S_i} [\omega s_j - E(u,j) - \gamma_u] = \sum_{v \in S_j} [\omega s_i - E(v,i) - \gamma_v],$$
or equivalently:
$$\omega s_i s_j - E(i,j) - \sum_{u \in S_i} \gamma_u = \omega s_i s_j - E(i,j) - \sum_{v \in S_j} \gamma_v,$$
or equivalently, there needs to exists a constant $c$ such that
\begin{equation}
\label{eq:c}
\sum_{u \in S_i} \gamma_u = c \quad \forall i.
\end{equation}
To recap, it remains to do the following. First choose $c$. Then choose $\gamma_u$ satisfying (\ref{eq:c}). Defining $R_{uj}$ by (\ref{eq:R}), we are now guaranteed that $\{R_{uj}\}_{u \in S_i}$ and $\{R_{vi}\}_{v \in S_j}$ are valid row and column sums respectively for $\Gamma_{S_i S_j}$ (for $i \ne j$). Then define $\nu_u$ by (\ref{eq:nu}), which guarantees $\mathrm{span}\{\one_i - \one_j\} \subseteq \ker(\Lambda)$. It remains to construct $\Gamma_{S_i S_j}$ explicitly from its row and column sums such that $\Gamma_{S_i S_j} > 0$. It also remains to show $\Lambda \succeq 0$ and $\ker(\Lambda) \subseteq \mathrm{span}\{\one_i - \one_j\}$. Note that we have not actually chosen any values for dual variables yet, other than what is required by complementary slackness; we have merely rewritten the complementary slackness conditions in terms of the new variables $\gamma_u$ and $R_{uj}$.

\subsection{Intervals for \texorpdfstring{$\gamma_v$}{gamma\_v}}

In this section we find necessary bounds for $\gamma_v$, which will guide our choice of these dual variables and of $c$. This is where the crucial connection between $\gamma_v$ and the information-theoretic threshold will become apparent. Let $v \in S_i$. For a lower bound on $\gamma_v$, we have that $\Lambda \succeq 0$ implies $\Lambda_{vv} \ge 0$ implies $\nu_v + \omega \ge 0$ which by (\ref{eq:nu}) implies $\gamma_v \ge \omega(s_i - 1) - E(v,i)$. For an upper bound, for any $j \ne i$ we must have that $\Gamma_{S_i S_j} > 0$ implies $R_{vj} > 0$ which by (\ref{eq:R}) implies $\gamma_v < \omega s_j - E(v,j)$. Therefore, $\gamma_v$ must lie in the interval
$$\gamma_v \in [\omega (s_i-1) - E(v,i) \; , \; \min_{j \ne i} \; \omega s_j - E(v,j)).$$
Our approach in choosing $\gamma_v$ will be to first make a preliminary guess $\gamma'_v$ and then add an adjustment term to ensure that (\ref{eq:c}) holds. In order to absorb this adjustment, we will aim for $\gamma'_v$ to lie in the slightly smaller interval
\begin{equation}
\label{eq:gamma-int}
\gamma_v' \in [\alpha_v \; , \; \beta_v]
\end{equation}
where
$$\alpha_v = \omega (s_i-1) - E(v,i) + \epsilon_1 \quad \text{and} \quad \beta_v = \min_{j \ne i} \; \omega s_j - E(v,j) - \epsilon_2 \quad \text{for } v \in S_i.$$
Here $\epsilon_1$ and $\epsilon_2$ are small $o(\log n)$ error terms which we will choose later.

The non-emptiness of these intervals is the crux of the proof, and provides the connection to the information-theoretic threshold:
\begin{lemma}
\label{lemma:intervals-nonempty}
If the divergence condition (\ref{eq:divergence-condition}) holds, then $\alpha_v < \beta_v$, for all $v$, with high probability.
\end{lemma}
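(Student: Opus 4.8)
The plan is to reduce the inequality $\alpha_v < \beta_v$ to a statement about differences of the quantities $E(v,i) - E(v,j)$, and then invoke the operational interpretation of the CH-divergence in Lemma~\ref{lemma:divergence-binomial}. Fix $v \in S_i$. Unwinding the definitions, $\alpha_v < \beta_v$ asks that for every $j \ne i$,
$$ \omega(s_i - 1) - E(v,i) + \epsilon_1 < \omega s_j - E(v,j) - \epsilon_2, $$
i.e.\ $E(v,i) - E(v,j) > \omega(s_i - s_j) - \omega + \epsilon_1 + \epsilon_2$. Using $s_k = \pi_k n$, the asymptotic $\omega = \tau \log n / n + o(\log n / n)$ from (\ref{eq:omega-asymptotics}), and the fact that $\epsilon_1, \epsilon_2$ are $o(\log n)$, the right-hand side equals $\tau(\pi_i - \pi_j) \log n + o(\log n)$. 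So I need: with high probability, simultaneously over all $v$ and all $j \ne i(v)$,
$$ E(v,i) - E(v,j) > T(n) \quad\text{where}\quad T(n) = \tau(\pi_i - \pi_j)\log n + o(\log n). $$

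This is exactly the complement of the tail event in Lemma~\ref{lemma:divergence-binomial}, whose probability is $n^{-D_+(i,j) + o(1)}$ for a single vertex $v$. Under the divergence condition (\ref{eq:divergence-condition}), $D_+(i,j) > 1$ for all pairs, so each such probability is $n^{-1-\delta}$ for some fixed $\delta > 0$ (after choosing $\delta$ smaller than $\min_{i\ne j} D_+(i,j) - 1$ and absorbing the $o(1)$); a union bound over the $n$ vertices and the constantly-many community pairs gives total failure probability $O(n \cdot n^{-1-\delta}) = o(1)$. This is the same naive union bound already flagged after Lemma~\ref{lemma:divergence-binomial}. One mild technical point to handle: Lemma~\ref{lemma:divergence-binomial} is stated for a vertex $v$ in community $i$ comparing against a \emph{fixed} other community $j$, with a threshold of the prescribed form; I should confirm the $o(\log n)$ slack I pick up from $\epsilon_1, \epsilon_2$ and the $-\omega$ term is genuinely $o(\log n)$ and does not depend on $v$, so that a single application of the lemma's rate covers all vertices in a given community uniformly.

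The main obstacle — really the only nontrivial step — is the bookkeeping that turns $\alpha_v < \beta_v$ into the clean form $E(v,i) - E(v,j) > \tau(\pi_i-\pi_j)\log n + o(\log n)$, making sure that all the error terms ($\epsilon_1$, $\epsilon_2$, the additive $-\omega = o(\log n)$, and the $o(\log n/n)$ error in the asymptotics of $\omega$, multiplied by $s_i - s_j = \Theta(n)$) aggregate to something that is still $o(\log n)$ and, crucially, uniform in $v$. Once that is in place, the result is an immediate consequence of Lemma~\ref{lemma:divergence-binomial} and the divergence condition. I would also remark that this lemma is where the information-theoretic optimality of the SDP is "spent": the interval $[\alpha_v, \beta_v]$ is nonempty precisely in the feasible regime, which is why no slack is lost.
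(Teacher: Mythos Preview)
Your proposal is correct and is essentially identical to the paper's own proof: both unwind $\beta_v - \alpha_v$ into the expression $E(v,i) - E(v,j) - \tau(\pi_i - \pi_j)\log n - o(\log n)$ and then apply Lemma~\ref{lemma:divergence-binomial} together with a union bound over the $n(r-1)$ vertex--community pairs. Your bookkeeping of the $o(\log n)$ terms (the $-\omega$, the $\epsilon_1,\epsilon_2$, and the error in (\ref{eq:omega-asymptotics}) multiplied by $s_i - s_j$) is exactly what the paper does implicitly, and your observation that the slack is uniform in $v$ is correct since all these quantities depend only on the community indices.
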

\begin{proof}
For $v \in S_i$, we have
\begin{align*}
\beta_v - \alpha_v &= \min_{j \neq i} \left( (E(v,j) - E(v,i)) - \omega(s_j - s_i) - \epsilon_1 - \epsilon_2 \right) \\
&= \min_{j \neq i} (E(v,j) - E(v,i)) - \omega(s_j - s_i) - o(\log n) \\
&\stackrel{(\ref{eq:omega-asymptotics})}{=} \min_{j \neq i} (E(v,j) - E(v,i)) - \tau(\pi_j - \pi_i) \log n - o(\log n).
\end{align*}
By Lemma~\ref{lemma:divergence-binomial}, for each $v \in S_i$ and all $j \neq i$, the probability of the tail event
$$ (E(v,j) - E(v,i)) - \tau(\pi_j - \pi_i) \log n - o(\log n) \leq 0 $$
is $n^{-D_+(i,j) + o(1)}$. Thus, when $D_+(i,j) > 1$ for all pairs $(i,j)$, we can take a union bound over all $n(r-1)$ such events, to find that $\beta_v - \alpha_v \geq 0$ with probability $1 - o(1)$.
\end{proof}

By summing (\ref{eq:gamma-int}) over all $v \in S_i$ (roughly following (\ref{eq:c}), although $\gamma_v$ and $\gamma_v'$ are slightly different), we obtain a target interval for $c$:
\begin{equation}
\label{eq:c-int}
c \in [\alpha_i \; , \; \beta_i] \quad \forall i
\end{equation}
where
$$\alpha_i = \sum_{v \in S_i} \alpha_v = \omega s_i (s_i-1) - E(i,i) + s_i \epsilon_1$$
$$\beta_i = \sum_{v \in S_i} \beta_v = \sum_{v \in S_i} \min_{j \ne i} \; [\omega s_j - E(v,j) - \epsilon_2].$$

The endpoints of the interval (\ref{eq:c-int}) for $c$ will turn out to be highly concentrated near a pair of deterministic quantities, namely:
\begin{equation}
\label{eq:alpha-beta-bar}
\bar \alpha_i = (\omega-p) s_i (s_i - 1) + s_i \epsilon_1 \quad \text{and} \quad \bar \beta_i = (\omega-q) s_i s_{\mathrm{min} \ne i} - s_i \epsilon_2
\end{equation}
where $s_{\mathrm{min} \ne i} = \min_{j \ne i} s_j$.

\subsection{Choice of \texorpdfstring{$c$ and $\gamma_v$}{c and gamma\_v}}

We have now made the key insight that $c$ and $\gamma_v$ must lie in certain intervals in order to get all the way to the information-theoretic threshold. However, as long as we fulfill these requirements, we have some ``wiggle room'' in choosing the dual variables. We will make what seems to be the simplest choices. We can deterministically take
\begin{equation}\label{eq:cchoice}
c = \frac12 (\omega-q) s_{\mathrm{min}} s_{\mathrm{2ndmin}},
\end{equation}
where $s_{\mathrm{min}},s_{\mathrm{2ndmin}}$ are the sizes of the two smallest communities (which may be equal). Then for sufficiently large $n$ we have, for all $i$,
$$ \bar \alpha_i < 0 < c < \bar\beta_i, $$
using the definitions (\ref{eq:alpha-beta-bar}) of $\bar\alpha_i,\bar\beta_i$ along with the facts $\epsilon_1, \epsilon_2 = o(\log n)$ and $q < \omega < p$ (Lemma~\ref{lemma:omega-bounds}).
Our specific choice of $c$ is not crucial; we can in fact pick any deterministic $0 < c < \bar\beta_i$ provided that $c = \Theta(n \log n)$ and $\bar\beta_i - c = \Theta(n \log n)$ for all $i$.

Recall that our goal is to choose each $\gamma_v$ to lie in (or close to) the interval $[\alpha_v,\beta_v]$ subject to the condition $\sum_{v \in S_i} \gamma_v = c$ required by (\ref{eq:c}). To achieve this, define the deterministic quantity
\begin{equation}
\label{eq:kappa}
\kappa_i = \frac{c - \bar \alpha_i}{\bar \beta_i - \bar \alpha_i} \in (0,1).
\end{equation}
Note that we expect $c$ to lie roughly $\kappa_i$ fraction of the way through the interval $[\alpha_i,\beta_i]$ (which is the sum over $v \in S_i$ of the intervals $[\alpha_v,\beta_v]$). Mirroring this, we make a rough initial choice $\gamma_v'$ (for $v \in S_i$) that is $\kappa_i$ fraction of the way through the interval $[\alpha_v,\beta_v]$:
\begin{equation}
\label{eq:gamma-prime}
\gamma_v' = (1-\kappa_i) \alpha_v + \kappa_i \beta_v
\end{equation}
However, these do not satisfy $\sum_{v \in S_i} \gamma_v' = c$ on the nose -- rather, there is some error that is on the order of the difference between $\alpha_i$ and $\bar\alpha_i$. We thus introduce an additive correction term $\delta_i$ chosen to guarantee $\sum_{v \in S_i} \gamma_v = c$. For $v \in S_i$,
$$\gamma_v = \gamma_v' + \delta_i, \qquad \delta_i = \frac{1}{s_i} \left(c - \sum_{v \in S_i} \gamma_v' \right).$$

Recall that our goal was for $\gamma_v$ to lie within some $o(\log n)$ error from the interval $[\alpha_v,\beta_v]$. By construction we have $\gamma_v' \in [\alpha_v,\beta_v]$ and so we will have succeeded if we can show $\delta_i = o(\log n)$. This will be one of the goals of the next section.

\subsection{High-probability bounds for random variables}

In this section we establish bounds on various variables in the dual certificate that will hold with high probability $1 - o(1)$. We can treat the failure of these bounds as a low-probability failure event for the algorithm. 

First recall the following version of the Bernstein inequality:
\begin{lemma}[Bernstein inequality]
\label{lemma:bernstein}
If $X_1, \ldots, X_k$ are independent zero-mean random variables with $|X_i| \le 1$, then for any $t > 0$,
$$\Pr\left[\sum_i X_i \ge t \right] \le \exp\left(\frac{-\frac{1}{2} t^2}{\sum_i \Var[X_i] + \frac{1}{3} t}\right).$$
Note that by replacing $X_i$ with $-X_i$ we get the same bound for $\Pr\left[\, \sum_i X_i \le -t \, \right]$.
\end{lemma}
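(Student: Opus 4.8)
The plan is to use the standard Chernoff (exponential moment) method. First I would fix a parameter $\lambda > 0$ and apply Markov's inequality to the nonnegative variable $e^{\lambda \sum_i X_i}$, obtaining $\Pr[\sum_i X_i \ge t] \le e^{-\lambda t}\,\mathbb{E}[e^{\lambda \sum_i X_i}]$, and then use independence to factor the expectation as $\prod_i \mathbb{E}[e^{\lambda X_i}]$. The whole argument then reduces to bounding each single-variable moment generating function $\mathbb{E}[e^{\lambda X_i}]$ and, at the end, choosing $\lambda$ well.

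For the single-variable bound I would expand $e^{\lambda X_i}$ as a power series and take expectations term by term (legitimate since $|\lambda X_i| \le \lambda$ is bounded): $\mathbb{E}[e^{\lambda X_i}] = 1 + \lambda\,\mathbb{E}[X_i] + \sum_{k \ge 2}\frac{\lambda^k \mathbb{E}[X_i^k]}{k!}$. The linear term vanishes because $X_i$ has mean zero, and for $k \ge 2$ the hypothesis $|X_i| \le 1$ gives $X_i^k \le |X_i|^k \le X_i^2$ pointwise, hence $\mathbb{E}[X_i^k] \le \mathbb{E}[X_i^2] = \Var[X_i]$; since every coefficient $\lambda^k/k!$ is positive, this yields $\mathbb{E}[e^{\lambda X_i}] \le 1 + \Var[X_i]\sum_{k \ge 2}\frac{\lambda^k}{k!}$. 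At this point I would invoke the elementary estimate $\sum_{k \ge 2}\frac{\lambda^k}{k!} = e^\lambda - 1 - \lambda \le \frac{\lambda^2/2}{1 - \lambda/3}$ for $0 \le \lambda < 3$ (which follows from $k! \ge 2\cdot 3^{k-2}$ and summing a geometric series), followed by $1 + x \le e^x$, to get $\mathbb{E}[e^{\lambda X_i}] \le \exp\!\left(\Var[X_i]\cdot\frac{\lambda^2/2}{1-\lambda/3}\right)$. Multiplying over $i$ and writing $\sigma^2 = \sum_i \Var[X_i]$ gives, for all $0 \le \lambda < 3$, the bound $\Pr[\sum_i X_i \ge t] \le \exp\!\left(-\lambda t + \sigma^2\cdot\frac{\lambda^2/2}{1-\lambda/3}\right)$.

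It then remains to choose $\lambda$. I would take $\lambda = \frac{t}{\sigma^2 + t/3}$, which lies in $[0,3)$ whenever $\sigma^2 > 0$ (the case $\sigma^2 = 0$ being trivial, as then $X_i \equiv 0$ almost surely). A short computation gives $1 - \lambda/3 = \frac{\sigma^2}{\sigma^2 + t/3}$, hence $\sigma^2\cdot\frac{\lambda^2/2}{1-\lambda/3} = \frac{\lambda^2}{2}(\sigma^2 + t/3) = \frac{\lambda t}{2}$, using $\lambda(\sigma^2 + t/3) = t$, so the exponent collapses to $-\lambda t + \frac{\lambda t}{2} = -\frac{\lambda t}{2} = -\frac{t^2/2}{\sigma^2 + t/3}$, which is exactly the claimed inequality. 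Finally, the two-sided remark follows by applying the one-sided bound just proven to $-X_1,\ldots,-X_k$, which are again independent, mean zero, bounded by $1$, and have the same variances.

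As for difficulty: there is essentially no real obstacle here, since Bernstein's inequality is entirely classical and every step above is elementary. The only mildly technical ingredients are the real-analysis estimate $e^\lambda - 1 - \lambda \le \frac{\lambda^2/2}{1-\lambda/3}$ on $[0,3)$ and the bit of algebra verifying that the choice $\lambda = t/(\sigma^2 + t/3)$ makes the exponent telescope to the stated form; both are routine.
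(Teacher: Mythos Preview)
Your proof is correct and follows the classical Chernoff/exponential-moment route; the one minor wording issue is the phrase ``$X_i^k \le |X_i|^k \le X_i^2$ pointwise,'' where what you really use (and what suffices) is $\mathbb{E}[X_i^k] \le \mathbb{E}[|X_i|^k] \le \mathbb{E}[X_i^2]$. The paper itself does not prove this lemma at all: it simply recalls Bernstein's inequality as a standard tool (``First recall the following version of the Bernstein inequality'') and then applies it, so there is no authorial proof to compare against---your argument is a complete and standard justification where the paper offers none.
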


For each vertex $v$, let
\begin{equation}
\label{eq:Delta}
\Delta_v = \max_j |E(v,j) - \EE[E(v,j)]|
\end{equation}
where $j$ ranges over all communities, including that of $v$. Recall that for $v \notin S_j$, $E(v,j) \sim \mathrm{Binom}(s_j,q)$ and so $\EE[E(v,j)] = s_jq$; and for $v \in S_j$, $E(v,j) \sim \mathrm{Binom}(s_i-1,p)$ and so $\EE[E(v,j)] = (s_i-1)p$.

Our motivation for defining the quantity $\Delta_v$ is its appearance in the following bounds:
\begin{align*}
\left|\alpha_v - \frac{\bar\alpha_i}{s_i}\right| &= \left|p(s_i-1) - E(v,i)\right| \le \Delta_v \\
\left|\beta_v - \frac{\bar\beta_i}{s_i}\right| &= \left|\min_{j \ne i}\;[\omega s_j - E(v,j)] - (\omega-q) s_{\mathrm{min} \ne i}\right| \le \Delta_v \\
\numberthis\label{eq:gamma'bound}
\left|\gamma'_v - \frac{c}{s_i}\right| &= \left|(1-\kappa_i) \alpha_v + \kappa_i \beta_v - \frac{c}{s_i}\right| \\
&\le \left|(1-\kappa_i) \frac{\bar\alpha_v}{s_i} + \kappa_i \frac{\bar\beta_v}{s_i} - \frac{c}{s_i}\right| + \Delta_v \\
&\stackrel{(\ref{eq:kappa})}{=} \Delta_v
\end{align*}
where the third bound makes use of the first two.

Toward bounding $\Delta_v$, note that we can apply Bernstein's inequality (Lemma~\ref{lemma:bernstein}) to bound each $E(v,j) - \EE[E(v,j)]$, and take a union bound to obtain
$$ \Pr[ \Delta_v \geq t ] \leq 2 r \exp\left( \frac{- \frac12 t^2}{np + \frac{1}{3} t} \right). $$
Taking $t = \log n \log\log n$ and union bounding over all $v$, we see that, with high probability, $\Delta_v \leq \log n \log\log n$ for all $v$. But this will not quite suffice for the bounds we need. Instead, taking $t = \log n / (\log\log n)^2$, we see that $\Delta_v \leq \log n / (\log\log n)^2$ for most values of $v$, with a number of exceptions that, with high probability, does not exceed $n^{1 - 1 / (\log\log n)^5}$; for these exceptions, we fall back to the bound of $\log n \log\log n$ above. Above we have used the following consequence of Markov's inequality: if there are $n$ bad events, each occurring with probability $\le p$, then $\Pr[\text{at least } k \text{ bad events occur}] \le \frac{n p}{k}$.

For the sake of quickly abstracting away this two-tiered complication, we make the following three computations up front:
\begin{align*}
\numberthis\label{eq:Deltabound1}
\sum_v \Delta_v &\leq n^{1 - 1/(\log\log n)^5} \log n \log\log n + n \, \frac{\log n}{(\log\log n)^2} \\
&= \bigo( n \log n / (\log\log n)^2 ), \\
\numberthis\label{eq:Deltabound2}
\sum_v \Delta_v^2 &\leq n^{1 - 1/(\log\log n)^5} \log^2 n \, (\log\log n)^2 + n \, \frac{\log^2 n}{(\log\log n)^4} \\
&= \bigo( n \log^2 n / (\log\log n)^4 ), \\
\numberthis\label{eq:Deltabound3}
\sum_{u,v} (\Delta_u + \Delta_v)^2 & \leq 8n \cdot n^{1 - 1/(\log\log n)^5} \log^2 n \, (\log\log n)^2 + 4 n^2 \frac{\log^2 n}{(\log\log n)^4} \\
&= \bigo( n^2 \log^2 n / (\log\log n)^4 ),
\end{align*}
where the sums range over all vertices $u$,$v$. Note that in each case, the non-exceptional vertices dominate the bound. (One can easily compare two terms in the above calculations by computing the logarithm of each.)

Now we can show that $\delta_i$, the correction term from the previous section, is small. For any $i$ we have
\begin{align*}
\numberthis\label{eq:deltabound}
|\delta_i| &= \left|\frac{1}{s_i}(c - \sum_{v \in S_i} \gamma'_v)\right| \\
&\stackrel{(\ref{eq:gamma'bound})}{\le} \left|\frac{1}{s_i}(c - \sum_{v \in S_i} \frac{c}{s_i})\right| + \frac{1}{s_i}\sum_{v \in S_i} \Delta_v \\
&= \frac{1}{s_i} \sum_{v \in S_i} \Delta_v \\
&\stackrel{(\ref{eq:Deltabound1})}{=} \mathcal{O}(\log n / (\log\log n)^2).
\end{align*}

We will be interested in defining the quantity $\Delta'_v = \Delta_v + |\delta_i|$ (where $v \in S_i$) due to its appearance in the following bounds:
\begin{align*}
\numberthis\label{eq:gammabound}
\gamma_v &= \gamma'_v + \delta_i \stackrel{(\ref{eq:gamma'bound})}{=} \frac{c}{s_i} \pm \bigo(\Delta'_v), \\
\numberthis\label{eq:Rbound}
R_{vj} &= \omega s_j - E(v,j) - \gamma_v \\
&= (\omega-q) s_j - \frac{c}{s_i} \pm \bigo(\Delta'_v). \\
\end{align*}

Using the identity $(x+y)^2 \le 2(x^2+y^2)$ along with (\ref{eq:Deltabound2}), (\ref{eq:Deltabound3}) and (\ref{eq:deltabound}), we have with high probability
\begin{align*}
\numberthis\label{eq:Delta'bound2}
\sum_v (\Delta'_v)^2 &= \bigo( n \log^2 n / (\log\log n)^4 ), \\
\numberthis\label{eq:Delta'bound3}
\sum_{u,v} (\Delta'_u + \Delta'_v)^2 &= \bigo( n^2 \log^2 n / (\log\log n)^4 ).
\end{align*}

\subsection{Bounds on \texorpdfstring{$\nu_v$ and $R_{vj}$}{nu\_v and R\_vj}}

We can now prove two key results that we will need later: with high probability,
\begin{equation}
\label{eq:nubound}
\nu_v \ge \frac{\log n}{\log\log n} \quad \forall v
\end{equation}
and
\begin{equation}
\label{eq:Rg0}
R_{vj} > 0 \quad \forall j, \; \forall v \notin S_j.
\end{equation}
These results should not come as a surprise because they were more or less the motivation for defining the interval $[\alpha_v,\beta_v]$ for $\gamma_v$ in the first place. Since the $\nu_v$ values lie on the diagonal of $\Lambda$, the bound on $\nu_v$ is important for proving $\Lambda \succeq 0$ which we need for dual feasibility. Since $R_{vj}$ are the row sums of $\Gamma$, the bound on $R_{vj}$ is important for achieving $\Gamma_{S_i S_j} > 0$ which we need for Proposition~\ref{prop:dual-cert}. The specific quantity $\frac{\log n}{\log\log n}$ is not critical -- anything would suffice that is $o(\log n)$ yet large enough to dominate some error terms in a later calculation.

In the previous section we showed (\ref{eq:deltabound}) we showed $|\delta_i| = o(\log n)$ with high probability, and so we can choose the error terms $\epsilon_1,\epsilon_2$ from the definition of $\alpha_v,\beta_v$ (which, recall, are required to be $o(\log n)$) to absorb $\delta_i$. Specifically, let
\begin{align*}
\epsilon_1 &= \max_i |\delta_i| + \omega + \frac{\log n}{\log \log n} = o(\log n)\\
\epsilon_2 &= \max_i |\delta_i| + 1 = o(\log n).
\end{align*}

Recall that by Lemma~\ref{lemma:intervals-nonempty} the intervals $[\alpha_v,\beta_v]$ are all nonempty with high probability, and by construction we have $\gamma'_v \in [\alpha_v, \beta_v]$.

Now we will show (\ref{eq:nubound}): $\nu_v \ge \frac{\log n}{\log\log n}$. For $v \in S_i$ we have
\begin{align*}
\nu_v &= E(v,i) - \omega s_i + \gamma_v\\
&= E(v,i) - \omega s_i + \gamma_v' + \delta_i\\
&\ge E(v,i) - \omega s_i + \alpha_v + \delta_i\\
&\ge E(v,i) - \omega s_i + \omega (s_i-1) - E(v,i) + \epsilon_1 - \delta_i\\
&\ge \frac{\log n}{\log \log n},
\end{align*}
using the choice of $\epsilon_1$.

Now we show (\ref{eq:Rg0}): $R_{vj} > 0$. For $v \in S_i$ and $j \ne i$ we have
\begin{align*}
R_{vj} &= \omega s_j - E(v,j) - \gamma_v \\
&= \omega s_j - E(v,j) - \gamma_v' - \delta_i \\
&\ge \omega s_j - E(v,j) - \beta_v - \delta_i \\
&= \omega s_j - E(v,j) - \min_{k \neq i} \; [\omega s_k - E(v,k)] + \epsilon_2 - \delta_i \\
&\ge 1 > 0
\end{align*}
using the choice of $\epsilon_2$.

\subsection{Choice of \texorpdfstring{$\Gamma$}{Gamma}}

We have shown how to choose strictly positive row sums $R_{uj}$ and column sums $R_{vi}$ of $\Gamma_{S_i S_j}$ (for $i \neq j$). There is still considerable freedom in choosing the individual entries, but we will make what seems to be the simplest choice: we take $\Gamma_{S_i S_j}$ to be the unique \emph{rank-one} matrix satisfying these row and column sums, namely
$$(\Gamma_{S_i S_j})_{uv} = \frac{R_{uj} R_{vi}}{T_{ij}}$$
where $T_{ij}$ is the total sum of all entries of $\Gamma_{S_i S_j}$,
\begin{equation}
\label{eq:T}
T_{ij} = \sum_{u \in S_i} R_{uj} = \sum_{v \in S_j} R_{vi}.
\end{equation}
We showed earlier that this last equality is guaranteed by (\ref{eq:c}). As the row sums $R_{uj}$ are all positive with high probability (\ref{eq:Rg0}), it follows that $\Gamma_{S_i S_j} > 0$.

\subsection{PSD calculation for \texorpdfstring{$\Lambda$}{Lambda}}

We have already shown that if we choose $\nu_v$ and $R_{vj}$ according to (\ref{eq:nu}) and (\ref{eq:R}) respectively, then we have $\mathrm{span}\{\one_i - \one_j\} \subseteq \ker(\Lambda)$. In order to show $\Lambda \succeq 0$ with $\ker(\Lambda) = \mathrm{span}\{\one_i - \one_j\}$, we need to show
$$x^\top \Lambda x > 0 \quad \forall x \perp \mathrm{span}\{\one_i - \one_j\}.$$
The orthogonal complement of $\mathrm{span}\{\one_i - \one_j\}$ is spanned by block 0-sum vectors $Z = \{z \in \mathbb{R}^n : \sum_{v \in S_i} z_v = 0 \; \forall i\}$ plus the additional vector $y' = \sum_i \frac{1}{s_i} \one_i$. Let $y = \frac{y'}{\|y'\|} = \sum_i \frac{1}{s_i} \one_i / \sqrt{\sum_i \frac{1}{s_i}}$. Fix $x \perp \mathrm{span}\{\one_i - \one_j\}$ with $\|x\| = 1$ and write $x = \beta y + \sqrt{1-\beta^2} z$ for $\beta \in [0,1]$ and $z \in Z$ with $\|z\| = 1$. We have
\begin{equation}
\label{eq:xx}
x^\top \Lambda x = \beta^2 y^\top \Lambda y + 2 \beta \sqrt{1-\beta^2} z^\top \Lambda y + (1-\beta^2) z^\top \Lambda z.
\end{equation}
We will bound the three terms in (\ref{eq:xx}) separately. In particular, we will show that (with high probability):
\begin{align*}
\numberthis\label{eq:yzbounds}
y^\top &\Lambda y \phantom{|} = \Omega(\log n), \\
| z^\top &\Lambda y | = \bigo(\log n / (\log \log n)^2), \\
z^\top &\Lambda z \phantom{|} = \Omega(\log n / \log \log n ).
\end{align*}
Once we have this, we can (for sufficiently large $n$) rewrite (\ref{eq:xx}) as
$$ x^\top \Lambda x \geq \beta^2 \, C_1 \log n - 2 \beta \sqrt{1-\beta^2} \, C_2 \frac{\log n}{(\log\log n)^2} + (1-\beta^2) \, C_3 \frac{\log n}{\log \log n} $$
for some positive constants $C_1, C_2, C_3$. For sufficiently large $n$ we have
$$ (C_1 \log n)\left(C_3 \frac{\log n}{\log \log n}\right) > \left(C_2 \frac{\log n}{(\log\log n)^2}\right)^2, $$
which implies $x^\top \Lambda x > 0$ for all $\beta \in [0,1]$, completing the proof that $\Lambda \succeq 0$ with $\ker(\Lambda) = \mathrm{span}\{\one_i - \one_j\}$. It remains to show the three bounds in (\ref{eq:yzbounds}).

\subsubsection{Compute \texorpdfstring{$y^\top \Lambda y$}{yT Lambda y}}

\begin{align*}
y'^\top \Lambda y' &= y'^\top(\diag(\nu) + \omega J - A - \Gamma)y' \\
&= \sum_v \nu_v y_v^2 + \omega r^2 - \sum_{i,j} \frac{E(i,j)}{s_i s_j} - \sum_{i,j} \frac{T_{ij}}{s_i s_j} \\
& \stackrel{(a)}{=} \sum_i \frac{1}{s_i^2} \sum_{v \in S_i} \nu_v + \omega r^2 - \sum_i \frac{E(i,i)}{s_i^2} - \sum_{i \ne j} \frac{E(i,j)}{s_i s_j} - \sum_{i \ne j} (\omega - \frac{E(i,j)}{s_i s_j} - \frac{c}{s_i s_j}) \\
& \stackrel{(b)}{=} \sum_i \frac{1}{s_i^2} (E(i,i) - \omega s_i^2 + c) +\omega r^2 - \sum_i \frac{E(i,i)}{s_i^2} - \omega r(r-1) + c \sum_{i \ne j} \frac{1}{s_i s_j} \\
&= -\omega r + c \sum_i \frac{1}{s_i^2} + \omega r^2 - \omega r(r-1) + c \sum_{i \ne j} \frac{1}{s_i s_j} \\
&= c \left(\sum_i \frac{1}{s_i} \right)^2
\end{align*}

where (a) expands $\sum_{i,j} T_{ij}$ using (\ref{eq:T}),(\ref{eq:R}),(\ref{eq:c}), and (b) expands $\sum_{v \in S_i} \nu_v$ using (\ref{eq:nu}),(\ref{eq:c}). Therefore
\begin{equation}
\label{eq:yy}
y^\top \Lambda y = \frac{1}{\|y'\|^2} \, y'^\top \Lambda y' = c \sum_i \frac{1}{s_i}.
\end{equation}

Since in (\ref{eq:cchoice}) we chose $c$ to be $\Theta(n \log n)$, we have $y^\top \Lambda y = \Theta(\log n)$ as desired.

\subsubsection{Lower bound for \texorpdfstring{$z^\top \Lambda y$}{zT Lambda y}}
\label{sec:zLy}

For $v \in S_i$,
\begin{align*}
(\Lambda y')_v &= \frac{\nu_v}{s_i} + \omega r - \sum_j \frac{E(v,j)}{s_j} - \sum_{j \ne i} \frac{R_{vj}}{s_j} \\
&= \frac{1}{s_i}(E(v,i) - \omega s_i + \gamma_v) + \omega r - \sum_j \frac{E(v,j)}{s_j} - \sum_{j \ne i} \frac{1}{s_j} (\omega s_j - E(v,j) - \gamma_v) \\
&= \gamma_v \sum_i \frac{1}{s_i}
\end{align*}
and so
$$(\Lambda y)_v = \frac{1}{\|y'\|} (\Lambda y')_v = \gamma_v \sqrt{\sum_i \frac{1}{s_i}}.$$

Let $(\Lambda y)^Z$ denote the projection of the vector $\Lambda y$ onto the subspace $Z$. For $v \in S_i$ we have
$$(\Lambda y)^Z_v = (\gamma_v - \frac{1}{s_i} \sum_{v \in S_i} \gamma_v) \sqrt{\sum_i \frac{1}{s_i}} \stackrel{(\ref{eq:c})}{=} (\gamma_v - \frac{c}{s_i}) \sqrt{\sum_i \frac{1}{s_i}}.$$

Now we have
\begin{align*}
z^\top \Lambda y \ge - \|(\Lambda y)^Z\| &= -\sqrt{\sum_i \sum_{v \in S_i} (\gamma_v - \frac{c}{s_i})^2} \sqrt{\sum_i \frac{1}{s_i}} \\
&\stackrel{(\ref{eq:gammabound})}{\ge} -\sqrt{\sum_i \sum_{v \in S_i} \bigo(\Delta'_v)^2 } \sqrt{\sum_i \frac{1}{s_i}} \\
&\stackrel{(\ref{eq:Delta'bound2})}{=} -\bigo(\log n / (\log \log n)^2)
\end{align*}

\subsubsection{Lower bound for \texorpdfstring{$z^\top \Lambda z$}{zT Lambda z}}

Note that $J$, $\EE A + pI$ and $\EE \Gamma$ are block-constant and so the quadratic forms $z^\top J z$, $z^\top (\EE A + pI) z$ and $z^\top (\EE \Gamma)z$ are zero for $z \in Z$. Then
\begin{align*}
z^\top \Lambda z &= z^\top (\diag(\nu) + \omega J - A - \Gamma) z\\
&= z^\top\diag(\nu)z - z^\top (A - \EE A) z + pz^\top I z - z^\top (\Gamma - \EE \Gamma)z\\
& \ge \min_v \nu_v + p - \|A - \EE A\| - \|\Gamma - \EE \Gamma\|.
\end{align*}

Earlier we showed (\ref{eq:nubound})
$$ \min_v \nu_v \geq \log n / \log \log n $$
with high probability and we have $p = \Theta(\frac{\log n}{n}) = o(\log n/\log\log n)$. It remains to bound $\|A - \EE A\|$ and $\|\Gamma - \EE \Gamma\|$. The next two sections show that each of these two terms is $o(\log n / \log \log n)$ with high probability. It then follows that $z^\top \Lambda z = \Omega(\log n / \log\log n)$, as desired.

\subsubsection{Upper bound for \texorpdfstring{$\|A - \EE A\|$}{|A - EA|}}

Strong bounds for the spectral norm $\|A - \EE A\|$ have already appeared in the block model literature. Specifically, Theorem~5.2 of \cite{lr} is plenty stronger than we need; it follows immediately that
$$ \|A - \EE A \| \leq \bigo(\sqrt{\log n}) = o(\log n / \log \log n). $$

\subsubsection{Upper bound for \texorpdfstring{$\|\Gamma - \EE \Gamma\|$}{|Gamma - E Gamma|}}

Recall that $\Gamma_{S_i S_j}$ has row sums $R_{uj}$ and total sum (of all entries)
$$T_{ij} = \sum_{u \in S_i} R_{uj} \stackrel{(\ref{eq:R},\ref{eq:c})}{=} \omega s_i s_j - E(i,j) - c.$$

By applying Bernstein's inequality (Lemma~\ref{lemma:bernstein}) to $E(i,j)$ we get a high-probability bound for $T_{ij}$:
\begin{align*}
\numberthis\label{eq:Tbound}
T_{ij} &= \omega s_i s_j - E(i,j) - c \\
&= (\omega-q) s_i s_j - c \pm \bigo(\sqrt{n} \log n).
\end{align*}

We now compute $\EE \Gamma$. This is block-constant, by symmetry under permuting vertices within each community, and this constant must be
$$ \EE [\Gamma_{uv}] = \frac{1}{s_i s_j} \EE [T_{ij}] = \frac{(\omega-q) s_i s_j - c}{s_i s_j}, $$
where $u \in S_i$, $v \in S_j$, $i \neq j$.

We can now compute
\begin{align*}
\Gamma_{uv} - \EE \Gamma_{uv} &= \frac{R_{uj} R_{vi}}{T_{ij}} - \frac{(\omega - q) s_i s_j - c}{s_i s_j} \\
&= \frac{s_i s_j R_{uj} R_{vi} - ((\omega-q)s_i s_j - c) T_{ij}}{s_i s_j T_{ij}} \\
&\stackrel{(a)}= \frac{\pm \bigo(n^2 \log n (\Delta'_u + \Delta'_v) + n^2 \Delta'_u \Delta'_v)}{s_i s_j ((\omega - q) s_i s_j - c) + o(n^3 \log n)} \\
&\stackrel{(b)}{=} \frac{\pm \bigo(n^2 \log n (\Delta'_u + \Delta'_v) + n^2 \Delta'_u \Delta'_v)}{\Theta(n^3 \log n)} \\
&= \pm \bigo((\Delta'_u + \Delta'_v) / n + \Delta'_u \Delta'_v / (n \log n)),
\end{align*}
where in step (a), we appeal to the bounds (\ref{eq:Rbound}), (\ref{eq:Tbound}), causing cancellations in the high-order terms; and in (b) we have used the choice of $c$ (\ref{eq:cchoice}) to check that the denominator is $\Theta(n^3 \log n)$.

We will bound the spectral norm of $\Gamma - \EE \Gamma$ by its Frobenius norm. While this bound is often weak, it will suffice here as $\Gamma$ has constant rank, so that we only expect to lose a constant factor.
\begin{align*}
\| \Gamma - \EE \Gamma \| &\leq \| \Gamma - \EE \Gamma \|_F \\
&= \sqrt{\sum_{u,v} \left(\Gamma_{uv} - \EE \Gamma_{uv}\right)^2} \\
&= \sqrt{\sum_{u,v} \bigo\left(\frac{\Delta'_u + \Delta'_v}{n} + \frac{\Delta'_u \Delta'_v}{n \log n} \right)^2} \\
&\stackrel{(a)}{\le} \sqrt{\sum_{u,v} \bigo\left(\frac{\Delta'_u + \Delta'_v}{n}\right)^2} + \sqrt{\sum_{u,v} \bigo\left(\frac{\Delta'_u \Delta'_v}{n \log n} \right)^2} \\
&\le \frac{1}{n} \sqrt{\sum_{u,v} \bigo\left(\Delta'_u + \Delta'_v\right)^2} + \frac{1}{n \log n} \sqrt{\bigo\left(\sum_u (\Delta'_u)^2 \right)\left(\sum_v (\Delta'_v)^2 \right)} \\
&\stackrel{(b)}{=} \bigo(\log n / (\log\log n)^2) = o(\log n / \log \log n),
\end{align*}
where (a) uses the triangle inequality for the Euclidean norm and (b) uses the high-probability bounds (\ref{eq:Delta'bound2}),(\ref{eq:Delta'bound3}) for expressions involving $\Delta'_v$.

This completes the proof that $\Lambda \succeq 0$ with $\ker(\Lambda) = \mathrm{span}\{\one_i - \one_j\}$. We have now satisfied all conditions of Proposition~\ref{prop:dual-cert}, and so we may conclude Theorem~\ref{thm:recovery}: Programs~\ref{prog:known-primal} and \ref{prog:unknown-primal} achieve exact recovery with probability $1 - o(1)$.


\section{Acknowledgements}
The authors are indebted to Ankur Moitra for suggesting the problem, for providing guidance throughout the project, and for several enlightening discussions on semirandom models. We would also like to thank David Rolnick for a helpful discussion on MLEs for the block model, and Roxane Sayde for comments on a draft of this paper.

\bibliographystyle{alpha}
\bibliography{sbm}

\appendix

\section{Proof of Lemma~\ref{lemma:omega-bounds}}
\label{appendix:omega-bounds}

In this appendix we verify, for all $0 < q < p < 1$, that $q < \omega < p$, where
$$ \omega = \frac{\log(1-q)-\log(1-p)}{\log p - \log q + \log(1-q) - \log(1-p)}. $$
The proof is an elementary computation using the bound: $\frac{x-1}{x} \leq \log x \leq x-1$ for all $x > 0$, with both inequalities strict unless $x = 1$.

For the lower bound, we proceed as follows:
\begin{align*}
\frac{1}{\omega}
&= 1 + \frac{\log \frac{p}{q}}{\log \frac{1-q}{1-p}} \\
&< 1 + \frac{\frac{p}{q}-1}{(\frac{1-q}{1-p}-1) \big/ (\frac{1-q}{1-p})} = \frac{1}{q},
\end{align*}
and for the upper bound, we proceed similarly:
\begin{align*}
\frac{1}{\omega}
&= 1 + \frac{\log \frac{p}{q}}{\log \frac{1-q}{1-p}} \\
&> 1 + \frac{\frac{p/q - 1}{p/q}}{\frac{1-q}{1-p} - 1} = \frac{1}{p},
\end{align*}
so that the result follows by taking reciprocals.

\section{Proof of Proposition~\ref{prop:lower-bound-closed-form}}
\label{appendix:lower-bound-closed-form}

In this appendix, we establish a closed form for the CH-divergence in the planted partition model.

The CH-divergence is defined in \cite{as} as 
$$ D_+(i,j) = \sup_{t \in [0,1]} \sum_k \pi_k (t \tilde Q_{ik} + (1-t) \tilde Q_{jk} - \tilde Q_{ik}^t \tilde Q_{jk}^{1-t}), $$
summing over all communities $k$ including $i$ and $j$. (The limits on $t$ are unimportant: one can show that the supremum over $t \in \RR$ always lies in $[0,1]$.)

Note that if $\tilde Q_{ik} = \tilde Q_{jk}$ then the $k$ term of this sum vanishes. In the planted partition model, we have $\tilde Q_{ii} = \tilde Q_{jj} = \tilde p$, $\tilde Q_{ij} = \tilde Q_{ji} = \tilde q$, and $\tilde Q_{ik} = \tilde q = \tilde Q_{jk}$ for all other $k$. In particular, only the $i$ and $j$ terms of the sum will contribute. Thus:
\begin{align*}
D_+(i,j) &= \sup_t\; t \tilde p \pi_i + (1-t) \tilde q \pi_i + t \tilde q \pi_j + (1-t) \tilde p \pi_j - \pi_i \tilde p^t \tilde q^{1-t} - \pi_j \tilde p^{1-t} \tilde q^t \\
&= \sup_t\; t (\tilde p - \tilde q)(\pi_i-\pi_j) + \pi_i \tilde q (1 - (\tilde p / \tilde q)^t) + \pi_j \tilde p (1 - (\tilde p / \tilde q)^{-t}).
\end{align*}
Substituting $u = t/(\log \tilde p - \log \tilde q)$, we obtain
$$ D_+(i,j) = \sup_{u \geq 0}\; u \tau (\pi_i - \pi_j) + \pi_i \tilde q (1-e^u) + \pi_j \tilde p (1-e^{-u}) $$
where $\tau$ is as defined in Proposition~\ref{prop:lower-bound-closed-form}.

As the supremand is smooth and concave in $u$, we can set the derivative in $u$ equal to zero in order to maximize it:
$$ 0 = \tau (\pi_i - \pi_j) - \pi_i \tilde q e^u + \pi_j \tilde p e^{-u}, $$
which is quadratic in $e^u$, and can be solved via the quadratic formula:
$$ e^u = \frac{\tau (\pi_i-\pi_j) + \sqrt{\tau^2(\pi_i-\pi_j)^2 + 4 \pi_i \pi_j \tilde p \tilde q}}{2 \pi_i \tilde q}. $$

To obtain the simplest possible form for $D_+$, it is worth also solving for $e^{-u}$:
$$ e^{-u} = \frac{-\tau (\pi_i-\pi_j) + \sqrt{\tau^2(\pi_i-\pi_j)^2 + 4 \pi_i \pi_j \tilde p \tilde q}}{2 \pi_j \tilde p}. $$
Dividing these two expressions, we obtain
$$ e^{2u} = \frac{\pi_j \tilde p}{\pi_i \tilde q} \frac{\tau(\pi_i-\pi_j) + \gamma}{\tau(\pi_j-\pi_i) + \gamma}, $$
where $\gamma = \sqrt{\tau^2(\pi_i-\pi_j)^2 + 4 \pi_i \pi_j \tilde p \tilde q}$.
We now express $u$ as half the log of this quantity.

Substituting back into the divergence, we obtain
$$ D_+(i, j) = \pi_i \tilde q + \pi_j \tilde p - \gamma + \frac12 \tau (\pi_i - \pi_j) \log\left( \frac{\pi_j \tilde p}{\pi_i \tilde q} \cdot \frac{\tau(\pi_i-\pi_j)+\gamma}{\tau(\pi_j-\pi_i)+\gamma} \right), $$
thus proving the proposition.

\section{Proof of Proposition~\ref{prop:divergence-monotonicity}}
\label{appendix:divergence-monotonicity}

In Proposition~\ref{prop:lower-bound-closed-form}, we found that $D_+(i,j) = \eta(\tilde p, \tilde q, \pi_i,\pi_j)$ for a certain explicit function $\eta$. We wish to see that $\eta$ is monotone increasing in its third and fourth parameters. This implies, for example, that when checking whether exact recovery is possible in the planted partition model, it suffices to check that the divergence is at least $1$ between the two smallest communities.

Note that $\eta(a,b,\alpha c, \alpha d) = \alpha \eta(a,b,c,d)$, and that $\eta(a,b,c,d)=\eta(a,b,d,c)$, so it suffices to show that $\eta(\tilde p, \tilde q, s, 1)$ is monotone in $s$.

As $\eta$ is smooth, we will show that $\frac{\partial}{\partial s} \eta \geq 0$. We will show this, in turn, by showing that $\lim_{s \to 0} \frac{\partial}{\partial s} \eta \geq 0$ and that $\frac{\partial^2}{\partial s^2} \eta \geq 0$.

\begin{itemize}
\item
We first compute:
\begin{align*}
\frac{\partial}{\partial s} \eta(\tilde p,\tilde q,s,1) &= \frac{1}{2s} \left[ 2\tilde qs - \sqrt{\omega^2(s-1)^2 + 4\tilde p\tilde qs} \right. \\  &\left. + \tau \left(1-s+s\log\left( \frac{\tilde p}{\tilde qs} \frac{-\tau(s-1) + \sqrt{\tau^2(s-1)^2+4\tilde p\tilde qs} }{\tau(s-1) + \sqrt{\tau^2(s-1)^2+4\tilde p\tilde qs}} \right) \right) \right], \\
 \frac{\partial^2}{\partial s^2} \eta(\tilde p,\tilde q,s,1) &= \frac{\tau^2(1+s)^2 + 2\tilde p\tilde qs - \tau(1+s) \sqrt{\tau^2(s-1)^2 + 4\tilde p\tilde qs}}{2s^2 \sqrt{\tau^2(s-1)^2 + 4\tilde p\tilde qs}}.
\end{align*}

\item To see that the second partial is non-negative, it suffices to see that the numerator is non-negative:
$$ \tau^2(1+s)^2 + 2\tilde p\tilde qs \stackrel{?}{\geq} \tau(1+s)\sqrt{\tau^2(s-1)^2+4\tilde p\tilde qs} $$
Both sides are non-negative, so it is equivalent to compare their squares:
$$ \tau^4(1+s)^4 + 4\tilde p\tilde qs\tau^2(1+s)^2 + 4\tilde p^2\tilde q^2s^2 \stackrel{?}{\geq} \tau^2(1+s)^2(\tau^2(s-1)^2+4\tilde p\tilde qs) $$
which is evidently true when $s \geq 0$.

\item To see that the limit $\lim_{s \to 0} \frac{\partial}{\partial s} \eta$ is non-negative, we first compute it: 
$$ \lim_{s \to 0} \frac{\partial}{\partial s} \eta = \frac{-\tilde p\tilde q}{\tau}+\frac{\tilde p\log\frac{\tilde p}{\tau} - \tilde q\log\frac{\tilde q}{\tau}}{\log \tilde p - \log \tilde q}. $$
We now divide through through by $\tilde q$, and set $\alpha = \frac{\tilde p}{\tilde q}$ and $\beta = \frac{\tau}{\tilde q}$, noting that $1 \leq \beta \leq \alpha$:
\begin{align*}
\frac{1}{\tilde q} \lim_{s \to 0} \frac{\partial}{\partial s} \eta &= \alpha(1-\frac{1}{\beta}) + \beta \log \frac1\beta \\
&\geq \alpha(1-\frac1\beta) - \beta(1-\frac1\beta) \\
&= (\alpha-\beta)(1-\frac1\beta) \geq 0.
\end{align*}
\end{itemize}
This proves the proposition.

\end{document}